\documentclass[pdflatex,sn-mathphys-num]{sn-jnl}% Math and Physical Sciences Numbered Reference Style
\usepackage{graphicx}%
\usepackage{multirow}%
\usepackage{amsmath,amssymb,amsfonts}%
\usepackage{amsthm}%
\usepackage{mathrsfs}%
\usepackage[title]{appendix}%
\usepackage{xcolor}%
\usepackage{textcomp}%
\usepackage{manyfoot}%
\usepackage{booktabs}%
\usepackage{algorithm}%
\usepackage{algorithmicx}%
\usepackage{algpseudocode}%
\usepackage{listings}%

\usepackage{braket}
\usepackage{bm}
\usepackage{tikz}
\usetikzlibrary{arrows.meta}

\newtheorem{lemma}{Lemma}
\newtheorem{theorem}{Theorem}
\newtheorem{corollary}{Corollary}

\newtheorem{example}{Example}
\theoremstyle{thmstyleone}%
\theoremstyle{thmstyletwo}%

\theoremstyle{thmstylethree}%

\begin{document}

\title{Kirkwood-Dirac classical states based on discrete Fourier transform: Representation with directed graph}

%%=============================================================%%
%% GivenName	-> \fnm{Joergen W.}
%% Particle	-> \spfx{van der} -> surname prefix
%% FamilyName	-> \sur{Ploeg}
%% Suffix	-> \sfx{IV}
%% \author*[1,2]{\fnm{Joergen W.} \spfx{van der} \sur{Ploeg} 
%%  \sfx{IV}}\email{iauthor@gmail.com}
%%=============================================================%%

\author[1]{\fnm{Lin-Yan} \sur{Cai}}\email{cailinyan2025@163.com}

\author*[2]{\fnm{Ying-Hui} \sur{Yang}}\email{yangyinghui4149@163.com}
\equalcont{These authors contributed equally to this work.}

\author[1]{\fnm{Zhu-Jun} \sur{Zheng}}\email{zhengzj@scut.edu.cn}
\equalcont{These authors contributed equally to this work.}

\affil[1]{\orgdiv{School of Mathematics}, \orgname{South China University of Technology}, \city{Guangzhou}, \postcode{510640}, \country{China}}

\affil*[2]{\orgdiv{School of Mathematics and Information Science}, \orgname{Henan Polytechnic University}, \orgaddress{\city{Jiaozuo}, \postcode{454000}, \state{State}, \country{China}}}

%%==================================%%
%% Sample for unstructured abstract %%
%%==================================%%

\abstract{
The Kirkwood-Dirac (KD) quasiprobability distribution is a fundamental representation for quantum states and has been widely applied in quantum metrology, quantum chaos, weak values in recent years. A quantum state is KD-classical if its KD-quasiprobability distribution forms a valid classical probability distribution with respect to two given bases, and KD-nonclassical otherwise, with the latter being closely associated with quantum advantages in various quantum  processes. In this work, we investigate the structural characteristics of the KD-classical state set when the transition matrix between two orthonormal bases takes the form of a discrete Fourier transform (DFT) matrix. First, we adopt an alternative analytical approach to prove that the set of KD-classical states in a $p^r$-dimensional Hilbert space is the convex hull of KD-classical pure states--a conclusion that was recently established by De Bi{\`e}vre et al [Annales Henri Poincar{\'e}, 1-20, 2025]. Furthermore, we define a directed graph and use it to characterize KD-classical pure states in a Hilbert space of arbitrary dimension $d$. That is, the convex hull of KD-classical pure states along any path from the start vertex to the end vertex in this directed graph is exactly the intersection of the KD-classical state set and the linear space spanned by these path-associated KD-classical pure states. This general result not only yields the $p^r$-dimensional conclusion in a straightforward manner but also encompasses Theorem 2 in the existing work [J. Phys. A, 57, 435303, 2024], demonstrating its generality and inclusiveness.}

%\keywords{Kirkwood-Dirac classical state, discrete Fourier transform, KD quasiprobability distribution, convex combination, directed graph}

%%\pacs[JEL Classification]{D8, H51}

%%\pacs[MSC Classification]{35A01, 65L10, 65L12, 65L20, 65L70}

\maketitle

\section{Introduction}
Quantum mechanics is distinguished from classical mechanics by a host of inherent non-classical properties, including quantum coherence \cite{baumgratz2014quantifying}, quantum entanglement \cite{horodecki2009quantum}, and quantum discord \cite{ollivier2001quantum}. In general, a physical process is deemed classical if it can be fully described by a joint probability distribution. In contrast, processes that can only be characterized by quasiprobability distributions are regarded as unique quantum mechanical phenomena \cite{schmid2024kirkwood,spekkens2008negativity}. Formally, quasiprobability distributions satisfy the normalization condition analogous to classical probability distributions, yet they violate certain Kolmogorov axioms and may take negative or non-real values. As a powerful analytical tool for investigating the non-classical nature of quantum mechanics, quasiprobability distributions have been widely studied, with the Winger function \cite{wigner1932quantum} being the most famous representative. The Winger function maps quantum states onto the classical phase space (position-momentum space) to depict the state of a quantum system. \par

With the rapid development of modern quantum mechanics, especially in the field of quantum computing, discrete quantum systems (e.g., qubits) have found increasingly extensive application, and the Kirkwood-Dirac (KD) quasiprobability distribution has emerged as a more suitable tool for describing such system. The KD-quasiprobability distribution was first proposed by Kirkwood in 1935 \cite{Kirkwood1935statistical}, and subsequently, Dirac independently introduced it in 1945 \cite{Dirac1945analogy} to express quantum non-classicality. Compared to the Wigner quasiprobability distribution, the KD-quasiprobability distribution allows the use of arbitrary pairs of observables to describe a quantum state. \par

Driven by the progress of quantum mechanics research and the practical demands of quantum computing, KD-quasiprobability distribution has become a prominent research focus in quantum information. The authors in Ref.\cite{arvidsson2020quantum,zheng2026kirkwood} have revealed the intrinsic correlation between KD-nonclassicality and quantum advantages in various
quantum tasks. To date, KD-distribution has been extensively applied to the exploration of quantum coherence \cite{budiyono2023quantifying,budiyono2024quantum}, quantum contextuality \cite{schmid2024kirkwood,spekkens2008negativity}, and quantum entanglement \cite{budiyono2025quantum}, and it also serves as a core tool in studies of weak values \cite{dressel2015weak,pusey2014anomalous,lupu2022negative}, quantum precision measurement \cite{arvidsson2020quantum,jenne2022unbounded}, quantum tomography \cite{lundeen2012procedure,thekkadath2016direct}, quantum chaos \cite{yunger2018quasiprobability,gonzalez2019out}, quantum thermodynamics \cite{lostaglio2018quantum,levy2020quasiprobability,lostaglio2020certifying,lostaglio2023kirkwood,gherardini2024quasiprobabilities}, fundamental principles of quantum mechanics \cite{hofmann2012complex,halliwell2016leggett,rastegin2023kirkwood}, quantum steering \cite{zheng2026kirkwood}. Experimentally, researchers in Ref.\cite{bargmann1964note,fernandes2024unitary,mukunda2001bargmann} have successfully measured the KD-quasiprobability distribution of quantum states.
Moreover, the expression of the KD-distribution in terms of Bargmann invariants \cite{bargmann1964note,fernandes2024unitary,mukunda2001bargmann} allows for numerical measurements via quantum circuits \cite{wagner2024quantum,oszmaniec2024measuring,quek2024multivariate}. These results lay a solid experimental foundation for investigating quantum system properties through the KD-quasiprobability distribution. For a comprehensive overview of the diverse research directions related to the KD-quasiprobability distribution, readers are referred to the review article \cite{arvidsson2024properties}.\par

In 2021, Arvidsson-Shukur et al. \cite{arvidsson2021conditions} established a sufficient condition for the nonclassicality of the KD-quasiprobability distribution. That is, the KD-quasiprobability distribution of a pure quantum state $\ket{\psi}$ contains negative or non-real values if $n_{a}(\psi)+n_{b}(\psi)>\frac{3}{2}d$, where $n_{a}(\psi)$ and $n_{b}(\psi)$ denote the number of non-zero coefficients when $\ket{\psi}$ is expanded in orthonormal bases $\big\{\ket{a_i} \big\}$ and $\big\{\ket{b_i}\big\}$, respectively. Subsequently, De Bi{\`e}vre \cite{de2021complete,de2023relating} investigated scenarios involving completely incompatible observables and uncovered the connection between support
uncertainty and KD-nonclassicality, proving that the KD-quasiprobability distribution of $\ket{\psi}$ also exhibits negative or non-real values when $n_{a}(\psi)+n_{b}(\psi)>d+1$ and the transition matrix between the two given bases has nonzero entries. Langrenez et al. \cite{langrenez2024characterizing} were the first to characterize the structure of mixed states within the set of KD-classical states under specific conditions. They demonstrated that the set of all KD-classical states in a prime-dimensional Hilbert space is exactly the convex hull of two  families of basis states when the transition matrix between the two bases is the discrete Fourier transform (DFT) matrix. In 2024, Xu \cite{xu2024kirkwood} derived a necessary and sufficient criterion for KD-classical pure states, which provides a complete characterization of the structural properties of all KD-classical pure states. Yang et al. \cite{yang2024geometry} proposed a conjecture that the set of KD-classical states is precisely the convex hull of KD-classical pure states when the transition matrix is the DFT matrix, and they provided a proof for this conjecture in Hilbert spaces of dimension $p^2$ (where $p$ is a prime). Xu \cite{xu2024hermitian} obtained relevant results for real operators associated with the DFT matrix, and De Bi{\`e}vre \cite{de2025kirkwood} proved that the validity of the  set of KD-classical states is precisely the convex hull of KD-classical pure states when the basis transition matrix is the DFT matrix in prime power-dimensional Hilbert spaces by virtue of group representation and there exists a counter-example for $d = 6$. Additionally, Spriet \cite{spriet2025characterizing} investigated the KD-classicality on second countable locally compact abelian groups. \par

In this work, we focus on the KD-classical states for which the transition matrix is the DFT matrix. We first adopt an alternative approach to prove that the set of KD-classical states is equal to the convex hull of KD-classical pure states in $p^r$-dimensional Hilbert space (Theorem \ref{thm1}). For arbitrary $d$-dimensional Hilbert spaces, we then define a directed graph and establish a novel result (Theorem \ref{dthm}): the convex hull of the KD-classical pure states on any path from the start vertex to the end vertex in this directed graph is the intersection of the set of KD-classical states and the linear space spanned by these KD-classical pure states on the path. We show
that both Theorem \ref{thm1} and Theorem 2 in Ref.\cite{yang2024geometry} are special cases of our Theorem \ref{dthm}.\par

The remainder of this paper is organized as follows. In Section \ref{Preliminaries}, we introduce the fundamental definitions and notations used throughout this paper. We characterize the geometric properties of KD-classical states in $p^r$-dimensional Hilbert space in Section \ref{prime-power}. In Section \ref{any-d}, we give  directed graph construction and establish Theorem \ref{dthm}  for arbitrary $d$-dimensional Hilbert space. Conclusions and discussions are given in Section \ref{conclusion-diss}.

\section{Preliminaries}\label{Preliminaries}

Consider a $d$-dimension Hilbert space $ \mathcal{H}$ with orthonormal bases $\mathcal{A}=\big\{\ket{a_i}\big\}$ and
$\mathcal{B}=\big\{\ket{b_j}\big\}$. In terms of these two bases $\mathcal{A}$ and $\mathcal{B}$, the KD distribution of a quantum state $\rho$ can be written as
\begin{equation*}
    Q_{ij}(\rho)=\bra{a_j}\rho\ket{b_i}\braket{b_i|a_j}.
\end{equation*}

The KD quasiprobability distribution satisfies the following properties:
\begin{equation*}
    \sum_{i,j}Q_{ij}(\rho)=1.
\end{equation*}
Similar to the normalization property of classical probability distributions—with the key distinction that the KD quasiprobability distribution takes values in the complex number field $\mathbb{C}$.\par
A quantum state $\rho$ with respect to bases $\mathcal{A}$ and $\mathcal{B}$ is called  KD-classical if all values of its KD quasiprobability distribution are non-negative. It means that the KD distribution of KD-classical state $\rho$ constitutes a classical probability distribution.\par
For the two orthonormal bases $\mathcal{A}$ and $\mathcal{B}$, the transition matrix $U$ between them is termed a discrete Fourier transform (DFT) matrix when its elements satisfy the condition 
\begin{equation*}
    U_{ij}=\braket{a_i|b_j}=\frac{1}{\sqrt{d}}\omega_d^{ij}.
\end{equation*}
where $\omega_d=e^{\frac{2\pi\sqrt{-1}}{d}}$, $d$ is the dimension of Hilbert space.\par
As established in the previous work \cite{xu2024kirkwood,yang2024geometry}, when the transition matrix $U$ is the DFT matrix, the corresponding KD-classical pure states are given by

\begin{equation}
    \ket{\psi_{ms}}=\frac{1}{\sqrt{x}}\sum_{k=0}^{x-1}\omega_x^{sk}\ket{a_{ky+m}}=\frac{1}{\sqrt{y}}\omega_d^{-ms}\sum_{l=0}^{y-1}\omega_y^{-ml}\ket{b_{lx+s}}, m\in \mathbb{Z}_y,s \in \mathbb{Z}_{x} \label{purekd}
\end{equation}
where $xy=d$. Note that for $x=d, y=1$ ,  the corresponding KD-classical pure state is $\{\ket{\psi_{ms}}\}=\{\ket{b_j}\}$, and $m=0,s=j$. Similarly for $x=1, y=d$, $\{\ket{\psi_{ms}}\}=\{\ket{a_i}\}$ with $m=i,s=0$.\par
With this notation the set $\mathrm{pure}(\mathrm{KD}_{\mathcal{A,B}}^+)$  of KD-classical pure state is

\begin{equation*}
    %\mathrm{pure}(KD_{\mathcal{A,B}}^+)=
    \mathrm{pure}(\mathrm{KD}_{\mathcal{A,B}}^+)=\bigg\{\ket{\psi_{ms}}\bra{\psi_{ms}}\quad\bigg|\quad \forall m\in \mathbb{Z}_y,s \in \mathbb{Z}_{x},xy=d\bigg\}.
\end{equation*}
Note that $\mathrm{KD}_{\mathcal{A,B}}^+$ denotes the set of KD-classical states. 
KD distribution of these KD-classical pure states has the following property.
\begin{lemma}\label{property}
    Suppose $d=xy$. If $i=m' \mod y$ and $j=s' \mod x$, then
    \begin{equation}\label{propofKD}
    \begin{aligned}
    \bra{a_i}\psi_{ms}\ket{b_j}\braket{b_j|a_i}=\left \{
    \begin{array}{ll}
    \frac{1}{d}\quad &if\quad m=m'\; and \;s=s'\\
    0&otherwise\\
   \end{array}
   \right.    
  \end{aligned}.
\end{equation}
\end{lemma}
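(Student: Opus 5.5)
The plan is a direct computation. I read the left-hand side of (\ref{propofKD}) as $\braket{a_i|\psi_{ms}}\braket{\psi_{ms}|b_j}\braket{b_j|a_i}$, that is, the KD distribution of the projector $\ket{\psi_{ms}}\bra{\psi_{ms}}$. I would evaluate each of the three factors separately, using the two expansions of $\ket{\psi_{ms}}$ given in (\ref{purekd}). I take the equality of the two expansions in (\ref{purekd}) as established in \cite{xu2024kirkwood,yang2024geometry}. Throughout, $i=ky+m'$ with $k\in\mathbb{Z}_x$ and $j=lx+s'$ with $l\in\mathbb{Z}_y$; this decomposition is unique because $d=xy$.

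\textbf{Step 1.} Use the $\mathcal{A}$-expansion of $\ket{\psi_{ms}}$. It gives $\braket{a_i|\psi_{ms}}=\frac{1}{\sqrt{x}}\omega_x^{sk}$ if $m'=m$, and $0$ otherwise. This holds because $\ket{\psi_{ms}}$ is supported only on the indices congruent to $m$ modulo $y$.

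\textbf{Step 2.} Use the $\mathcal{B}$-expansion of $\ket{\psi_{ms}}$. It gives $\braket{\psi_{ms}|b_j}=\frac{1}{\sqrt{y}}\omega_d^{ms}\omega_y^{ml}$ if $s'=s$, and $0$ otherwise. Here I take complex conjugates of the phases in (\ref{purekd}).

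\textbf{Step 3.} The DFT form of $U$ gives $\braket{b_j|a_i}=\overline{U_{ij}}=\frac{1}{\sqrt{d}}\omega_d^{-ij}$.

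From Steps 1 and 2, the product vanishes unless $m=m'$ and $s=s'$. This is the "otherwise" case of (\ref{propofKD}).

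In the remaining case $m=m'$, $s=s'$, the product of the three factors is
\begin{equation*}
\frac{1}{\sqrt{x}\sqrt{y}\sqrt{d}}\,\omega_x^{sk}\,\omega_d^{ms}\,\omega_y^{ml}\,\omega_d^{-(ky+m)(lx+s)}.
\end{equation*}
The prefactor $\frac{1}{\sqrt{x}\sqrt{y}\sqrt{d}}$ equals $\frac{1}{d}$. So the only real work is checking that the phases cancel. Expanding the exponent gives
\begin{equation*}
(ky+m)(lx+s)=kl\,d+ky\,s+ml\,x+ms.
\end{equation*}
Using $\omega_d^{y}=\omega_x$ and $\omega_d^{x}=\omega_y$, this yields
\begin{equation*}
\omega_d^{-(ky+m)(lx+s)}=\omega_d^{-kld}\,\omega_x^{-ks}\,\omega_y^{-ml}\,\omega_d^{-ms}=\omega_x^{-ks}\,\omega_y^{-ml}\,\omega_d^{-ms},
\end{equation*}
since $\omega_d^{-kld}=1$. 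These three phases cancel exactly against $\omega_x^{sk}\omega_d^{ms}\omega_y^{ml}$, so the value is $\frac{1}{d}$.

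The only delicate point is the sign bookkeeping. One must conjugate the phases correctly in Step 2, and take $\braket{b_j|a_i}$ as the conjugate of $U_{ij}$ rather than $U_{ij}$ itself. Once this is done the cancellation is immediate.
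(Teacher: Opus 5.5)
Your proposal is correct and matches the paper's proof in Appendix~\ref{prop}: both factor the KD value into $\braket{a_i|\psi_{ms}}$, $\braket{\psi_{ms}|b_j}$ and $\braket{b_j|a_i}$ using the two expansions in (\ref{purekd}), read off the vanishing when $m\neq m'$ or $s\neq s'$, and cancel the phases via $\omega_d^{ij}=\omega_x^{sk}\omega_d^{ms}\omega_y^{ml}$. Your conjugation bookkeeping in Step 2 is cleaner than the paper's, which calls the middle factor $\braket{b_j|\psi_{ms'}}$ but writes down the phases of its complex conjugate $\braket{\psi_{ms}|b_j}$.
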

The proof of Lemma \ref{property} is given in  \ref{prop}.\par

For the DFT matrix as the transition matrix, Yang et al\cite{yang2024geometry} proved $\mathrm{KD}_{\mathcal{A,B}}^+=\mathrm{conv}\left(\mathrm{pure}(\mathrm{KD}_{\mathcal{A,B}}^+)\right)$ when $d=p^2$, and conjectured this result holds for any $d$. Applying the finite group representation theory, De Bièvre et al \cite{de2025kirkwood} proved that this result holds only for $d=p^r$ but there exists a counter-example for $d=6$.
    
In the following, we will use an alternative method to prove $\mathrm{KD}_{\mathcal{A,B}}^+=\mathrm{conv}\left(\mathrm{pure}(\mathrm{KD}_{\mathcal{A,B}}^+)\right)$ in $d=p^r$ dimensional Hilbert space. Then for any $d$ we give a novel result about KD-classical states.

\def\QEDclosed{\mbox{\rule[0pt]{1.3ex}{1.3ex}}}
\def\QED{\QEDclosed}
\def\proof{\indent{\em Proof}.}
\def\endproof{\hspace*{\fill}~\QED\par\endtrivlist\unskip}

\section{KD-classical States in $p^r$ dimensional Hilbert space}\label{prime-power}%%

In this subsection, we will characterize the KD classicality of quantum states for DFT matrix as the transition matrix in a $p^r$-dimensional Hilbert space.

For a prime $p$, every element $a \in \mathbb{Z}_{p^r}$ can be uniquely expressed in its $p$-adic expansion as
\begin{equation}
    a=a_0p^{0}+a_1p+...+a_{r-1}p^{r-1},
\end{equation}
where the  $0\leq a_i\leq p-1$ for all $i$. This tuple $(a_0,a_1,...a_{r-1})$ uniquely determines the integer $a$. For simplicity, we denote integer $a$ by vector $\bm{a}_{0...r-1}$, i.e.,
\begin{equation}
    a=a_0p^{0}+a_{1}p+...+a_{r-1}p^{r-1}:=(a_0,a_{1}...a_{r-1}):=\bm{a}_{0...r-1}.\label{a-exp}
\end{equation}
Note that Eq.(\ref{a-exp}) can also be written as
\begin{equation*}
\begin{split}
    a=&(a_0p^{0}+...+a_{w-1}p^{w-1})+p^{w}(a_{w}p^0+...+a_{z}p^{z-w})+p^{z+1}(a_{z+1}p^{0}+...+a_{r-1}p^{r-z-2})\\
     =&(a_0,...a_{w-1})+p^{w}(a_{w},...a_{z})+p^{z+1}(a_{z+1},...a_{r-1})\\
     =&\bm{a}_{0...w-1}+p^{w}\bm{a}_{w...z}+p^{z+1}\bm{a}_{z+1...r-1}.
\end{split}
\end{equation*}
The difference between two integers $\bm{a}_{0...r-1}$ and $\bm{a}'_{0...r-1}$ can be denoted by $\Delta^{\bm{a}\bm{a}'}_{0...r-1}$, i.e.,
\begin{equation}\label{diffeqn}
    \Delta^{\bm{a}\bm{a}'}_{0...r-1}=\bm{a}_{0...r-1}-\bm{a}'_{0...r-1}=\Delta^{\bm{a}\bm{a}'}_{0...w-1}+p^{w}\Delta^{\bm{a}\bm{a}'}_{w...z-w}+p^{z+1}\Delta^{\bm{a}\bm{a}'}_{z+1...r-z-2}
\end{equation}
%For a vector $(a_0,a_1...a_{r-1})=\bm{a}_{0...r-1}$, we introduce the notation $\bm{a}_{w...z}$ with $0\leq w\leq z\leq r-1$ to denote the new vector consisting of the $w$-th through $z$-th components, and the integer of this new vector represent is :
%\begin{equation*}
%    \bm{a}_{w...z}=(a_w...a_z)=a_w+...+a_{z}p^{z-w}.
%\end{equation*}
%In the following discussion, the boldface vector corresponds to an integer.\par

We now return to the KD-classical pure states in Eq.(\ref{purekd}). For the dimension $d=p^r$, consider the indices $i=ky+m$ for the state $\ket{a_{ky+m}}$ and $j=lx+s$ for the state $\ket{b_{lx+s}}$. Suppose that $x=p^{\alpha},y=p^{\beta}$, then $\alpha+\beta=r$. Analyzing the $p$-adic representations of $i$ and $j$, we have
\begin{equation}
\begin{split}
    &i=m+kp^{\beta}=\bm{i}_{0...\beta-1}+p^{\beta}\bm{i}_{\beta...r-1}=\bm{i}_{0...r-1},\\
    &j=s+lp^{\alpha}=\bm{j}_{0...\alpha-1}+p^{\alpha}\bm{j}_{\alpha...r-1}=\bm{j}_{0...r-1},
\end{split} \label{ijms}
\end{equation}
where
\begin{equation*}
    \begin{split}
        m=\bm{i}_{0...\beta-1}\;,\;k=\bm{i}_{\beta...r-1}\;,\;
        s=\bm{j}_{0...\alpha-1}\;,\; l=\bm{j}_{\alpha...r-1}.
    \end{split}
\end{equation*}

Therefore, the KD-classical pure states in Eq.(\ref{purekd}) can be rewritten as
\begin{equation}\label{p_use}
\begin{split}
    \ket{\psi_{\substack{\bm{i}_{0...\beta-1}\\\bm{j}_{0...\alpha-1}}}}
    &=\frac{1}{\sqrt{p^{\alpha}}}\sum_{\bm{i}_{\beta...r-1}}^{}
    \omega_{p^{\alpha}}^{\bm{j}_{0...\alpha-1}\bm{i}_{\beta...r-1}}
    \ket{a_{\bm{i}_{0...\beta-1}+p^{\beta}\bm{i}_{\beta...r-1}}}\\
    &=\frac{1}{\sqrt{p^{\beta}}}
    \omega_d^{-\bm{i}_{0...\beta-1}\bm{j}_{0...\alpha-1}}
    \sum_{\bm{j}_{\alpha...r-1}}
    \omega_{p^{\beta}}^{-\bm{i}_{0...\beta-1}\bm{j}_{\alpha...r-1}}\ket{b_{\bm{j}_{0...\alpha-1}+p^{\alpha}\bm{j}_{\alpha...r-1}}}.
\end{split}
\end{equation}
Here we adopt the research methodology from Ref.\cite{langrenez2024characterizing,yang2024geometry} to investigate the relation among $\mathrm{span}_{\mathbb{R}}\{\mathrm{pure}(\mathrm{KD}_{\mathcal{A,B}}^+)\}$,  $\mathrm{KD}_{\mathcal{A,B}}^+$ and $\mathrm{conv}(\mathrm{pure}(\mathrm{KD}_{\mathcal{A,B}}^+))$.
Let us first consider a property on KD-classical pure states. For simplicity, we denote the projective operator $|\psi\rangle\langle\psi|$ by $\psi$ throughout the paper.

\begin{lemma}\label{lemma1}
The relation between KD-classical states corresponding to different factorizations of $p^r$ is
\begin{equation*}
    \begin{split}
        &\sum\limits_{j_{\alpha-1}=0}^{p-1}
\psi_{\substack{\bm{i}_{0...\beta-1}\\\bm{j}_{0...\alpha-1}}}
=\sum\limits_{i_{\beta}=0}^{p-1}
\psi_{\substack{\bm{i}_{0...\beta}\\\bm{j}_{0...\alpha-2}}},\quad 1\leq\alpha\leq r-1.
    \end{split}
\end{equation*}
\end{lemma}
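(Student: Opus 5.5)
The plan is to compare the two sides of Lemma \ref{lemma1} entry by entry in the basis $\mathcal{A}$, using the first expression in Eq.(\ref{p_use}) for each projector. Fix $m=\bm{i}_{0...\beta-1}$ and $s'=\bm{j}_{0...\alpha-2}$, and write $\bm{j}_{0...\alpha-1}=s'+p^{\alpha-1}t$ with $t=j_{\alpha-1}\in\{0,\dots,p-1\}$.

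\textbf{Left-hand side.} Every term is supported on the vectors $\ket{a_{m+p^\beta k}}$ with $k\in\mathbb{Z}_{p^\alpha}$. Expanding each projector $\psi$ and summing over $t$, the $(k,k')$ matrix element is
\begin{equation*}
\frac{1}{p^{\alpha}}\sum_{t=0}^{p-1}\omega_{p^\alpha}^{(s'+p^{\alpha-1}t)(k-k')}
=\frac{1}{p^{\alpha}}\,\omega_{p^\alpha}^{s'(k-k')}\sum_{t=0}^{p-1}\omega_{p}^{t(k-k')} .
\end{equation*}
By the geometric-sum identity for $p$-th roots of unity, this equals $\frac{1}{p^{\alpha-1}}\omega_{p^\alpha}^{s'(k-k')}$ when $k\equiv k' \pmod p$ and $0$ otherwise.

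\textbf{Right-hand side.} Here the factorization is $x'=p^{\alpha-1}$, $y'=p^{\beta+1}$, and the first index is $m+p^\beta i_\beta$. The corresponding state is
\begin{equation*}
\frac{1}{\sqrt{p^{\alpha-1}}}\sum_{k''\in\mathbb{Z}_{p^{\alpha-1}}}\omega_{p^{\alpha-1}}^{s'k''}\ket{a_{m+p^\beta(i_\beta+pk'')}} .
\end{equation*}
Setting $k=i_\beta+pk''$, this state is supported on the same vectors $\ket{a_{m+p^\beta k}}$, restricted to the residue class $k\equiv i_\beta \pmod p$. Summing the projectors over $i_\beta$ therefore produces nonzero $(k,k')$ entries exactly when $k\equiv k'\pmod p$. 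Writing $k=i_\beta+pk_1$ and $k'=i_\beta+pk_2$, such an entry equals
\begin{equation*}
\frac{1}{p^{\alpha-1}}\omega_{p^{\alpha-1}}^{s'(k_1-k_2)}=\frac{1}{p^{\alpha-1}}\omega_{p^{\alpha}}^{s'p(k_1-k_2)}=\frac{1}{p^{\alpha-1}}\omega_{p^\alpha}^{s'(k-k')} .
\end{equation*}
This coincides with the left-hand side entry by entry, which proves the identity.

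\textbf{Main obstacle.} The only delicate point is the index bookkeeping. One must check that the $p$-adic splitting $\bm{i}_{0...\beta}=m+p^\beta i_\beta$ and $k=\bm{i}_{\beta...r-1}=i_\beta+p\,\bm{i}_{\beta+1...r-1}$ are consistent with Eq.(\ref{ijms}). One must also check that the phase identity $\omega_{p^{\alpha-1}}=\omega_{p^\alpha}^{p}$ is applied with exponents taken modulo the correct order. As a sanity check, both sides are orthogonal projectors of rank $p$, and their traces agree.
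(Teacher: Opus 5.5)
Your proposal is correct and follows essentially the same route as the paper: expand both sides in the basis $\mathcal{A}$, split off the factor $\omega_p^{j_{\alpha-1}(k-k')}$ from the phase, use the geometric sum to kill the entries with $k\not\equiv k' \pmod p$ (the paper's case $i_\beta\neq i'_\beta$), and match the surviving entries with those of $\sum_{i_\beta}\psi$ via $\omega_{p^{\alpha-1}}=\omega_{p^\alpha}^{p}$. Your substitution $k=i_\beta+pk_1$ is exactly the paper's splitting $\bm{i}_{\beta...r-1}=i_\beta+p\,\bm{i}_{\beta+1...r-1}$.
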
\par

Note that the non-bold letter $j_{\alpha-1}$ beneath the summation symbol denotes the $(\alpha-1)$-th component of vector $\bm{j}_{0...r-1}$.
The complete proof is provided in \ref{The proof of Lemma1}.

When $x=d, y=1$ ,  the corresponding KD-classical pure state is $\{\ket{\psi_{ms}}\}=\{\ket{b_j}=\psi_{0j}\}$. Similarly for $x=1, y=d$, $\{\ket{\psi_{ms}}\}=\{\ket{a_i}=\ket{\psi_{i0}}\}$ with $m=i,s=0$. Applying Lemma \ref{lemma1}, we can obtain the following corollary.
\begin{corollary}\label{corollary}
\begin{equation*}
    \begin{split}
        &\sum_{i_{r-1}}a_{\bm{i}_{0...r-1}}=\sum_{j_0}\psi_{\substack{\bm{i}_{0...r-2}\\j_0}}\\
        &\sum_{i_0}\psi_{\substack{i_0\\\bm{j}_{0...r-1}}}=\sum_{j_{r-1}}b_{\bm{j}_{0...r-1}}
    \end{split}
\end{equation*}
\end{corollary}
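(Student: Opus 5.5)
The corollary is the specialization of Lemma \ref{lemma1} to the two extreme values of $\alpha$ allowed in that lemma, $\alpha=1$ and $\alpha=r-1$. The plan is to substitute these values, then use the identifications of the boundary factorizations ($x=d,y=1$ and $x=1,y=d$) with the bases $\{\ket{b_j}\}$ and $\{\ket{a_i}\}$ noted after Eq.~(\ref{purekd}).

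\textbf{First identity.} Take $\alpha=1$, so $\beta=r-1$. The left-hand side of Lemma \ref{lemma1} becomes $\sum_{j_0}\psi_{\bm{i}_{0...r-2},\,j_0}$, which corresponds to the factorization $x=p$, $y=p^{r-1}$. The right-hand side is $\sum_{i_{r-1}}\psi_{\bm{i}_{0...r-1},\,\bm{j}_{0...-1}}$. Here the $\bm{j}$-index is empty, i.e.\ $s=0$ with $x=1,y=d$. By Eq.~(\ref{purekd}) with $x=1$, $\ket{\psi_{m0}}=\ket{a_m}$, so $\psi_{\bm{i}_{0...r-1},\emptyset}=a_{\bm{i}_{0...r-1}}$ (writing $a_i$ for the projector $\ket{a_i}\bra{a_i}$, following the paper's convention). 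This gives the first line.

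\textbf{Second identity.} Take $\alpha=r-1$, so $\beta=1$. The left-hand side is $\sum_{j_{r-2}}\psi_{i_0,\,\bm{j}_{0...r-2}}$ and the right-hand side is $\sum_{i_1}\psi_{\bm{i}_{0,1},\,\bm{j}_{0...r-3}}$. This does not directly give the stated identity. So for the second identity I would instead reinterpret Lemma \ref{lemma1} at the other boundary, formally $\alpha=r$, $\beta=0$. The right-hand side of the corollary's second line involves $\psi_{\emptyset,\bm{j}_{0...r-1}}$ with $y=1$. By Eq.~(\ref{purekd}), $\ket{\psi_{0s}}=\frac{1}{\sqrt d}\sum_k\omega_d^{sk}\ket{a_k}=\ket{b_s}$, so this is $b_{\bm{j}_{0...r-1}}$. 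The left-hand side corresponds to $\beta=1$, $\alpha=r-1$, and the identity reads $\sum_{j_{r-1}}b_{\bm{j}}=\sum_{i_0}\psi_{i_0,\bm{j}_{0...r-2}}$. The index ranges in the corollary as written should be checked against this; I expect the intended statement to be exactly this.

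\textbf{Main obstacle.} Lemma \ref{lemma1} is stated only for $1\le\alpha\le r-1$, and the case $\alpha=r$ lies outside that range. I would therefore check it directly. $\sum_{j_{r-1}}\ket{b_{\bm{j}_{0...r-2}+p^{r-1}j_{r-1}}}\bra{\cdot}$ is the projector onto the span of $b$-vectors with fixed residue $t=\bm{j}_{0...r-2}$ mod $p^{r-1}$. By Eq.~(\ref{purekd}) with $x=p^{r-1}$, $y=p$, each $\ket{\psi_{i_0,t}}$ equals (up to phase) $\frac{1}{\sqrt p}\sum_l\omega_p^{-i_0 l}\ket{b_{lp^{r-1}+t}}$. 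As $i_0$ ranges over $\mathbb{Z}_p$ these form a discrete Fourier basis of that same $p$-dimensional span, so their projectors also sum to the projector onto it.

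The analogous direct check works for the first identity: $\sum_{i_{r-1}}\ket{a_{m+p^{r-1}i_{r-1}}}\bra{\cdot}$ with $m=\bm{i}_{0...r-2}$ equals the sum of projectors onto the Fourier basis $\frac{1}{\sqrt p}\sum_k\omega_p^{j_0k}\ket{a_{m+kp^{r-1}}}$ of the same span. Both identities therefore reduce to the unitarity of the $p\times p$ DFT matrix, and the bookkeeping is consistent with the $p$-adic notation of Eq.~(\ref{p_use}).
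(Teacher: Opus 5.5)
Your proposal is correct, and you read the second line correctly. As printed, $\psi_{i_0,\bm{j}_{0...r-1}}$ is ill-formed: $m=i_0\in\mathbb{Z}_p$ forces $y=p$ and $x=p^{r-1}$, so $s=\bm{j}_{0...r-2}$. The intended identity $\sum_{i_0}\psi_{i_0,\bm{j}_{0...r-2}}=\sum_{j_{r-1}}b_{\bm{j}_{0...r-1}}$ is the one actually used at the end of Step~1 in the proof of Lemma~\ref{lemma2}.

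Your first line follows the paper's route exactly: Lemma~\ref{lemma1} at $\alpha=1$, together with $\psi_{m0}=a_m$ for $x=1$. For the second line the paper only says ``applying Lemma~\ref{lemma1}'', which silently uses $\alpha=r$, $\beta=0$, outside the stated range $1\le\alpha\le r-1$. This is harmless. The appendix proof of Lemma~\ref{lemma1} goes through verbatim for $\beta=0$: Eq.~(\ref{p_use}) then simply expands $\ket{b_j}$ in the $a$-basis, and summing over $j_{r-1}$ kills the terms with $i_0\neq i_0'$. So the lemma really holds for $1\le\alpha\le r$.

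Your direct check is a different and more elementary justification. In both boundary cases one side is the projector onto the span of $p$ basis vectors, and the other is the sum of projectors onto a Fourier-rotated orthonormal basis of the same span. Everything therefore reduces to unitarity of the $p\times p$ DFT. This closes the range gap without any phase bookkeeping, but it only works at the boundary. For $2\le\alpha\le r-1$ neither side is a sum of basis projectors, so one must show that the two $p$-dimensional ranges coincide. That is what the paper's computation does uniformly in $\alpha$.
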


In the follows, we will use Lemma \ref{lemma1} to prove Lemma \ref{lemma2}.

\begin{lemma}\label{lemma2}
    $\mathrm{KD}_{\mathcal{A,B}}^{+}\cap \mathrm{span}_{\mathbb{R}}\left(\mathrm{pure}(\mathrm{KD}_{\mathcal{A,B}}^{+})\right)=\mathrm{conv}\left(\mathrm{pure}(\mathrm{KD}_{\mathcal{A,B}}^{+})\right)$ for dimension $d=p^r$
\end{lemma}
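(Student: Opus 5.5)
The inclusion $\mathrm{conv}(\mathrm{pure}(\mathrm{KD}^+_{\mathcal{A,B}}))\subseteq \mathrm{KD}^+_{\mathcal{A,B}}\cap\mathrm{span}_{\mathbb{R}}(\mathrm{pure}(\mathrm{KD}^+_{\mathcal{A,B}}))$ is immediate, because $\mathrm{KD}^+_{\mathcal{A,B}}$ is convex. So the work lies in the reverse inclusion. Take $\rho=\sum_{\alpha=0}^{r}\sum_{m,s} c^{(\alpha)}_{ms}\psi^{(\alpha)}_{ms}$ with real coefficients, where $\psi^{(\alpha)}$ denotes the pure states of Eq.~(\ref{p_use}) for $x=p^\alpha$, $y=p^{r-\alpha}$, and assume $\rho$ is KD-classical. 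The goal is to rewrite $\rho$ with all coefficients nonnegative. Trace normalization then makes the coefficients sum to one, so $\rho$ is a convex combination.

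\emph{Step 1 (compute the KD distribution).} By Lemma~\ref{property}, $Q_{ji}(\psi^{(\alpha)}_{ms})=\frac1d$ exactly when $m=i \bmod p^{r-\alpha}$ and $s=j\bmod p^{\alpha}$, and it is $0$ otherwise. In $p$-adic notation this condition reads $m=\bm{i}_{0\dots r-\alpha-1}$ and $s=\bm{j}_{0\dots\alpha-1}$. Hence
\begin{equation*}
d\,Q_{ji}(\rho)=\sum_{\alpha=0}^{r} c^{(\alpha)}_{\bm{i}_{0\dots r-\alpha-1},\,\bm{j}_{0\dots \alpha-1}} \ge 0 \quad\text{for all } i,j .
\end{equation*}
So KD-classicality becomes a family of linear inequalities. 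Each inequality sums one coefficient from each level $\alpha$, and the indices are nested prefixes of the digit strings of $i$ and $j$.

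\emph{Step 2 (normalize the representation using Lemma~\ref{lemma1}).} Lemma~\ref{lemma1} and Corollary~\ref{corollary} give linear relations among the $\psi^{(\alpha)}$. Each relation says that a sum over the last $j$-digit at level $\alpha$ equals a sum over the next $i$-digit at level $\alpha-1$. These relations let me shift mass between adjacent levels without changing $\rho$. I plan to use them greedily, going level by level from $\alpha=r$ (the $\mathcal{B}$ basis) down to $\alpha=0$ (the $\mathcal{A}$ basis).

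At level $\alpha$, I group the coefficients into blocks of $p$ that share $\bm{j}_{0\dots\alpha-2}$ and $\bm{i}_{0\dots r-\alpha-1}$. Inside each block I subtract the block minimum from all $p$ entries, which makes the minimum entry zero. By Lemma~\ref{lemma1}, this subtraction is compensated by adding the same amount to the corresponding block of $p$ coefficients at level $\alpha-1$. After this sweep, every block at every level $\alpha\ge 1$ contains a zero, and all the "excess negativity" has been pushed to level $0$.

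\emph{Step 3 (conclude nonnegativity), the main obstacle.} I must show that the normalized coefficients are all nonnegative. For a given coefficient $c^{(\alpha)}_{ms}$, I choose a pair $(i,j)$ compatible with $(m,s)$ by selecting the free digits so that every other level contributes its zero entry. This is possible because the blocks are indexed by nested prefixes. Concretely, I fix the digits of $j$ beyond position $\alpha-1$ one at a time so that the level-$\alpha'$ coefficient vanishes for each $\alpha'>\alpha$. Likewise, I fix the digits of $i$ so that the coefficient vanishes at each level $\alpha'<\alpha$. Level $0$ is the exception, since its coefficient is determined entirely by $i$, and this requires care. Once these choices are made, Step 1 gives $c^{(\alpha)}_{ms}=d\,Q_{ji}(\rho)\ge0$.

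The delicate part is checking that the digit choices at different levels do not conflict. Levels above $\alpha$ constrain only higher digits of $j$, and levels below $\alpha$ constrain only higher digits of $i$, so they should be independent. I also need to make the reduction asymmetric in $i$ and $j$ so that level $0$ and level $r$ each get a zero in every block. If a single greedy sweep fails to achieve this, my fallback is induction on $r$: restrict to a fixed top digit and apply the statement for $p^{r-1}$, which is what the nested block structure of Lemma~\ref{lemma1} suggests.
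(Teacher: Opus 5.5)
Your Steps 1 and 2 are correct and are the paper's argument in mirror image. The paper sweeps upward from the $\mathcal{A}$-level using Corollary~\ref{corollary} and Lemma~\ref{lemma1}, subtracting block minima so that the $\mathcal{B}$-level absorbs all the compensations. You sweep downward and make level $0$ the sink.

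\textbf{The gap is in Step 3.} It targets the wrong coefficients, and as written it fails for them.
\begin{itemize}
\item After your sweep, every coefficient at a level $\alpha\ge 1$ is already nonnegative by construction, because you subtracted the block minimum. No KD inequality is needed there.
\item Your argument for those levels would not work anyway. For $1\le\alpha'<\alpha$, the zero of a level-$\alpha'$ block lies along the digit $j_{\alpha'-1}$, and that digit is already pinned by $s=\bm{j}_{0\dots\alpha-1}$. Choosing the free digits $i_{r-\alpha},\dots,i_{r-\alpha'-1}$ therefore gives you no control over it.
\item Level $0$ has no zeros at all, since it is the sink.
\item You also cannot arrange a zero in every block at both level $0$ and level $r$, because the subtracted minima have to land somewhere. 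Already for $r=1$ and $\rho=I/p$, the only relation between the two bases is $\sum_i a_i=\sum_j b_j$. So every representation has $c^{(0)}_i\equiv\tfrac1p-t$ and $c^{(1)}_j\equiv t$, and the two minima cannot both vanish.
\end{itemize}
The sink is exactly where KD-positivity has to be used. So the case you set aside as ``the exception, requiring care'' is the only case that needs an argument. The induction fallback signals that you did not close it.

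\textbf{The fix} is your own ``concretely'' sentence specialized to $\alpha=0$. Let $\tilde c^{(\alpha)}$ denote the normalized coefficients. Since $\rho\mapsto Q(\rho)$ is linear, Step 1 applies to them as well:
$d\,Q_{ji}(\rho)=\tilde c^{(0)}_{i}+\sum_{\alpha=1}^{r}\tilde c^{(\alpha)}_{\bm{i}_{0\dots r-\alpha-1},\,\bm{j}_{0\dots\alpha-1}}$.
Fix $i$ and choose the digits of $j$ in order:
\begin{itemize}
\item take $j_0$ to be a zero of the level-$1$ block indexed by $\bm{i}_{0\dots r-2}$;
\item then take $j_1$ to be a zero of the level-$2$ block indexed by $(\bm{i}_{0\dots r-3},j_0)$;
\item continue in this way, ending with $j_{r-1}$ as a zero of the level-$r$ block indexed by $\bm{j}_{0\dots r-2}$.
\end{itemize}
Each choice depends only on digits already fixed, so nothing conflicts. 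You get $\tilde c^{(0)}_i=d\,Q_{ji}(\rho)\ge 0$.

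This is exactly the paper's Step 2. There the final coefficient is unwound along the argmin chain $i^*_0,i^*_1,\dots$ into a sum of original coefficients equal to $d\,Q_{j'i^*}(\rho)$. Your block zeros are those argmins, mirrored onto the $j$-digits. With this in place your proof is complete, and you can drop both the Step 3 argument for $\alpha\ge1$ and the induction fallback.
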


The complete proof of Lemma \ref{lemma2} is provided in \ref{The proof of Lemma2}.

The definition of the KD quasi-probability distribution can be naturally generalized to self-adjoint operators. For a self-adjoint operator $F$, we define
\begin{equation}\label{selfkd}
    Q_{ij}(F)=\bra{a_{i}}F\ket{b_{j}}\braket{b_{j}|a_{i}}.
\end{equation}
If $Q_{ij}(F)\in \mathbb{R}$ for all $i,j\in \mathbb{Z}_d$, we refer to the self-adjoint operator $F$ as a KD-real operator. The set of all such KD-real operators is denoted by $\mathrm{KD}_{\mathcal{A,B}}^r$. It is straightforward to verify that $\mathrm{KD}_{\mathcal{A,B}}^r$ forms a linear space.
The following Lemma establishes the connection between $\mathrm{KD}_{\mathcal{A,B}}^r$ and $\mathrm{KD}_{\mathcal{A,B}}^+$.

%\begin{lemma}\label{lemma3}
%   $\mathrm{KD}_{\mathcal{A,B}}^r=\mathrm{span}_{\mathbb{R}}\left(\mathrm{pure}(\mathrm{KD}_{\mathcal{A,B}}^+)\right)$
%   leads to
%   $\mathrm{KD}_{\mathcal{A,B}}^+=\mathrm{conv}\left(\mathrm{pure}(\mathrm{KD}_{\mathcal{A,B}}^+)\right)$
%\end{lemma}

%\begin{proof}
%    It is immediate that $\mathrm{conv}\left(\mathrm{pure}(\mathrm{KD}_{\mathcal{A,B}}^+)\right)\subseteq \mathrm{KD}_{\mathcal{A,B}}^+$. It therefore suffices to establish the converse inclusion. By lemma\ref{lemma2}, we have
%    \begin{equation*}
%        \forall \rho\in \mathrm{KD}_{\mathcal{A,B}}^+\subseteq \mathrm{KD}_{\mathcal{A,B}}
%    ^r=\mathrm{span}_{\mathbb{R}}\left(\mathrm{pure}(\mathrm{KD}_{\mathcal{A,B}}^+)\right),
%    \end{equation*}
 %   which means
%    \begin{equation*}
 %       \rho\in \mathrm{KD}_{\mathcal{A,B}}^{+}\cap \mathrm{span}_{\mathbb{R}}\left(\mathrm{pure}(\mathrm{KD}_{\mathcal{A,B}}^{+})\right)=\mathrm{conv}\left(\mathrm{pure}(\mathrm{KD}_{\mathcal{A,B}}^{+})\right).
%    \end{equation*}
% That means 
 %   \begin{equation*}
 %        \mathrm{KD}_{\mathcal{A,B}}^+\subseteq \mathrm{conv}\left(\mathrm{pure}(\mathrm{KD}_{\mathcal{A,B}}^+)\right).
 %   \end{equation*} 
%\end{proof}\par

Recently, Xu \cite{xu2024hermitian} and De Bièvre et al \cite{de2025kirkwood} independently obtained the following result.

\begin{lemma}\label{lemmakdr}
    $\mathrm{KD}_{\mathcal{A,B}}^r=\mathrm{span}_{\mathbb{R}}\left(\mathrm{pure}(\mathrm{KD}_{\mathcal{A,B}}^+)\right)$ for any dimension $d$.
\end{lemma}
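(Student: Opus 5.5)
We prove the two inclusions separately. Throughout, let $\tilde a_i=|a_i\rangle\langle a_i|$ and $\tilde b_j=|b_j\rangle\langle b_j|$, the projectors onto the basis vectors.

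\emph{The inclusion $\supseteq$.} This direction is immediate. Each $\psi_{ms}$ is self-adjoint and KD-classical, so all its KD values are real. Since $Q_{ij}$ is real-linear in $F$, every real linear combination of the $\psi_{ms}$ has real KD values.

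\emph{The inclusion $\subseteq$: reduction to matrix entries.} We compute the dimension of $\mathrm{KD}^r_{\mathcal{A,B}}$ directly. Write a self-adjoint $F=\sum_{i,i'}F_{ii'}\ket{a_i}\bra{a_{i'}}$. The DFT gives $\braket{a_{i'}|b_j}\braket{b_j|a_i}=\frac1d\omega_d^{(i'-i)j}$, so
\[
Q_{ij}(F)=\frac{1}{d}\sum_{i'}F_{ii'}\,\omega_d^{(i'-i)j}=\frac1d\sum_{t}F_{i,i+t}\,\omega_d^{tj}.
\]
For fixed $i$, requiring this to be real for every $j$ means that the sequence $t\mapsto F_{i,i+t}$ has a real discrete Fourier transform. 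This happens exactly when $F_{i,i+t}=\overline{F_{i,i-t}}$. Combining this with self-adjointness, $F_{i,i-t}=\overline{F_{i-t,i}}$, gives the single condition
\[
F_{i,i+t}=F_{i-t,i}\quad \text{for all } i,t\in\mathbb{Z}_d .
\]
Thus the matrix entries are constant along the orbits of the map $(i,i')\mapsto(2i-i',i)$, and $\mathrm{KD}^r$ is the real space of self-adjoint matrices constant on these orbits. We will count the real dimension of this space orbit by orbit, keeping track of which orbits force real entries and which pair with a conjugate orbit.

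\emph{The inclusion $\subseteq$: spanning.} We then show that the $\psi_{ms}$ span $\mathrm{KD}^r$. In the $\mathcal{A}$ basis, each $\psi_{ms}$ with $d=xy$ has entries $\frac1x\omega_x^{s(k-k')}$ on the pairs $(ky+m,k'y+m)$, and zero elsewhere. Summing over $s$ with characters $\omega_x^{-sc}$ isolates the operators
\[
E_{m,c}^{(y)}=\sum_k \ket{a_{ky+m}}\bra{a_{(k+c)y+m}},
\]
and their Hermitian parts. The goal is to show that these operators, as $y$ ranges over the divisors of $d$, span the space described in the previous step. Concretely, for each orbit we pick $g=\gcd(t,d)$, where $t=i'-i$. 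The support of $E^{(y)}$ with $y=g$ contains the relevant entries along a whole residue class modulo $g$. We then peel off contributions by Möbius-type inclusion–exclusion over the divisors, in the spirit of Lemma~\ref{lemma1}, which handles the prime-power case.

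\emph{Expected obstacle.} The delicate step is matching the orbits of $(i,i')\mapsto(2i-i',i)$ with the supports of the $E^{(y)}$. When $d$ is even, $i\mapsto 2i$ is not invertible, so the orbit structure is irregular. We plan to handle this by a dimension count: compute $\dim_{\mathbb{R}}\mathrm{KD}^r$ by counting orbits, compute the dimension of the span of the $\psi_{ms}$ via their linearly independent combinations $E^{(y)}$ with $\gcd$-exact offsets, and show the two numbers agree. If the direct count is unwieldy, the fallback is to apply the Chinese remainder theorem, $\mathbb{Z}_d\cong\prod\mathbb{Z}_{p^r}$, which factors the DFT as a tensor product. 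One would then reduce to prime powers and verify that both $\mathrm{KD}^r$ and the span factor as tensor products.
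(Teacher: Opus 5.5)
Your reduction of $\mathrm{KD}^r_{\mathcal{A,B}}$ to matrix conditions is correct: $Q_{ij}(F)=\frac1d\sum_t F_{i,i+t}\omega_d^{tj}$, realness for all $j$ is equivalent to $F_{i,i+t}=\overline{F_{i,i-t}}$, and with self-adjointness this becomes $F_{i,i+t}=F_{i-t,i}$. The inclusion $\supseteq$ is also fine. (The paper gives no proof of this lemma; it imports it from \cite{xu2024hermitian} and \cite{de2025kirkwood}, the latter via representation theory. A completed version of your argument would therefore be an independent, elementary proof.)

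The gap is the spanning step, which is the whole content of $\subseteq$. You state it as a goal and then list three strategies (M\"obius inclusion--exclusion, an orbit/dimension count, a CRT reduction), but carry out none of them. The obstacle you anticipate also does not exist. The map $T(i,i')=(2i-i',i)$ is a bijection of $\mathbb{Z}_d^2$ for every $d$, with inverse $(a,b)\mapsto(b,2b-a)$. It preserves the offset: $T(i,i+t)=(i-t,(i-t)+t)$. So on the diagonal of offset $t$ it is translation by $-t$, and its orbits are exactly $O_{t,m}=\{(i,i+t): i\equiv m \bmod g\}$ with $g=\gcd(t,d)$, whatever the parity of $d$. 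The CRT fallback has nothing to land on in this paper either. Lemma~\ref{lemma1} is a linear relation among pure states, not a spanning result, and the prime-power case of the present lemma is also only cited.

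With that observation the proof closes in a few lines, with no inclusion--exclusion or counting. Take $y=g$, $x=d/g$ and $c=t/g\in\mathbb{Z}_x$. Then $E^{(g)}_{m,c}=\sum_{k\in\mathbb{Z}_x}\ket{a_{kg+m}}\bra{a_{kg+m+t}}$ is precisely the $0/1$ indicator matrix of the single orbit $O_{t,m}$; its support equals the orbit rather than merely containing it. Moreover $E^{(g)}_{m,c}=\sum_{s\in\mathbb{Z}_x}\omega_x^{sc}\psi_{ms}$. Every $F\in\mathrm{KD}^r_{\mathcal{A,B}}$ is constant on each $O_{t,m}$, so $F=\sum_{t,m}F_{m,m+t}\,E^{(\gcd(t,d))}_{m,\,t/\gcd(t,d)}$ is a complex combination $\sum_k\lambda_k\psi_k$ of KD-classical pure states, each orbit coming from the single vertex $y=\gcd(t,d)$. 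Since $F$ and all $\psi_k$ are self-adjoint, $F=\frac12(F+F^\dagger)=\sum_k\mathrm{Re}(\lambda_k)\psi_k\in\mathrm{span}_{\mathbb{R}}\left(\mathrm{pure}(\mathrm{KD}^+_{\mathcal{A,B}})\right)$. If you still want the count, this also gives $\dim_{\mathbb{R}}\mathrm{KD}^r_{\mathcal{A,B}}=\sum_{t\in\mathbb{Z}_d}\gcd(t,d)$.
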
\par
Since $\mathrm{KD}_{\mathcal{A,B}}^+ \subseteq \mathrm{KD}_{\mathcal{A,B}}^r $, it follows $\mathrm{KD}_{\mathcal{A,B}}^{+}\cap \mathrm{span}_{\mathbb{R}}\left(\mathrm{pure}(\mathrm{KD}_{\mathcal{A,B}}^{+})\right)=\mathrm{KD}_{\mathcal{A,B}}^{+}$  . Hence, in $p^r$-dimensional space, Lemma \ref{lemma2} becomes
\begin{theorem}\label{thm1}
    Let $p$ be a prime, $r\in \mathbb{Z}^+$, for the Hilbert space of $p^r$-dimension \begin{equation}
        \mathrm{KD}_{\mathcal{A,B}}^+=\mathrm{conv}\left(\mathrm{pure}(\mathrm{KD}_{\mathcal{A,B}}^+)\right).
    \end{equation}    
\end{theorem}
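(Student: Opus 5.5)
Theorem \ref{thm1} follows by combining Lemma \ref{lemma2} with Lemma \ref{lemmakdr}; the plan is to verify the two inclusions separately.

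The inclusion $\mathrm{conv}\left(\mathrm{pure}(\mathrm{KD}_{\mathcal{A,B}}^+)\right)\subseteq \mathrm{KD}_{\mathcal{A,B}}^+$ is immediate. Each $Q_{ij}$ is real-linear in $\rho$. Hence a convex combination of states with non-negative KD distributions again has a non-negative, normalized KD distribution. It is also a density operator, since the set of states is convex. (Lemma \ref{property} even gives these distributions explicitly: every entry is $0$ or $1/d$.)

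For the reverse inclusion, let $\rho\in\mathrm{KD}_{\mathcal{A,B}}^+$. Since all $Q_{ij}(\rho)\geq 0$, in particular they are real, so $\rho\in\mathrm{KD}_{\mathcal{A,B}}^r$. By Lemma \ref{lemmakdr}, valid for every $d$ and in particular for $d=p^r$, we get $\rho\in\mathrm{span}_{\mathbb{R}}\left(\mathrm{pure}(\mathrm{KD}_{\mathcal{A,B}}^+)\right)$. Thus $\rho\in \mathrm{KD}_{\mathcal{A,B}}^{+}\cap \mathrm{span}_{\mathbb{R}}\left(\mathrm{pure}(\mathrm{KD}_{\mathcal{A,B}}^{+})\right)$, and Lemma \ref{lemma2} for $d=p^r$ identifies this set with $\mathrm{conv}\left(\mathrm{pure}(\mathrm{KD}_{\mathcal{A,B}}^+)\right)$.

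At this level nothing is hard; all the substance sits in Lemma \ref{lemma2}, which is where the prime-power structure enters. To prove it, I would write $\rho=\sum c_{ms}^{(\alpha)}\psi_{ms}^{(\alpha)}$ with real coefficients. I would then use the linear relations of Lemma \ref{lemma1} and Corollary \ref{corollary} to move weight along the chain of factorizations $p^{\alpha}p^{\beta}$, $\alpha=0,\dots,r$, one $p$-adic digit at a time, until all coefficients are non-negative. Lemma \ref{property} would be used to read off each coefficient from the KD entries of $\rho$, which are non-negative.

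The main obstacle is the digit-by-digit re-expression. One must ensure the redistribution never creates negative coefficients. This works because each relation in Lemma \ref{lemma1} links only adjacent factorizations, with sums over a single digit, so the factorizations form a linear chain in $\alpha$ with no branching. That chain is exactly what fails for composite non-prime-power $d$, such as the counterexample $d=6$.
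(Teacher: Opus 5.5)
Your proposal is correct and follows the paper's argument. The paper obtains Theorem \ref{thm1} exactly as you do: it notes $\mathrm{KD}_{\mathcal{A,B}}^+\subseteq\mathrm{KD}_{\mathcal{A,B}}^r=\mathrm{span}_{\mathbb{R}}(\mathrm{pure}(\mathrm{KD}_{\mathcal{A,B}}^+))$ by Lemma \ref{lemmakdr}, so Lemma \ref{lemma2} reduces to the stated equality; your sketch of Lemma \ref{lemma2} (subtracting the minimum coefficient and pushing it along the chain of factorizations $p^\alpha p^\beta$ via Lemma \ref{lemma1} and Corollary \ref{corollary}, then using KD-positivity with Lemma \ref{property} for the last group) also matches the paper's appendix proof.
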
\par

\section{KD-classicality in Hilbert Spaces of Arbitrary Dimension via Directed Graph Path Representations}\label{any-d}

In this section, we try to characterize KD-classical states in any $d$-dimensional Hilbert spaces.

\subsection{Representation of Kirkwood-Dirac Classical Pure States}\label{msklchoose}\mbox{}\par

We first show how to represent the $m,s,k,l$ in Eq.(\ref{purekd}) in $d$-dimensional Hilbert space. Suppose $d$ has the prime factorization $d=p_1^{r_1}p_2^{r_2}...p_n^{r_n}$, where $p_u$ are primes and $p_u \neq p_v$ whenever $u\neq v$.

Consider the system of congruences for the variable $i$
\begin{equation}
  \begin{aligned}
   \left \{
   \begin{array}{ll}
    i=i_1\mod p_1^{r_1}\\
    i=i_2\mod p_2^{r_2}\\
    \vdots\\
    i=i_n\mod p_n^{r_n}\\
   \end{array}
   \right.    
  \end{aligned}.
 \end{equation}
By the Chinese Remainder Theorem, the congruence system has a unique solution $i=\sum_{u=1}^{n}i_uM_uN_u \mod d$,
where $M_u=d/p_u^{r_u}$ and $M_uN_u=1 \mod p_u^{r_u}$.
It follows that the integer $i$ can be uniquely determined by $(i_1,i_2...i_n)$, where $0\leq i_u\leq p_u^{r_u}-1$.

For each $i_u$, it can be expanded via its $p_u$-adic representation, i.e.,%%待添加向量表示的说明
\begin{equation}
    i_u=(i_{u})_{0}p_u^{0}+(i_{u})_{1}p_u^{1}+...+(i_{u})_{r_u-1}p_u^{r_u-1}=(\bm{i_u})_{0...r_u-1}
\end{equation}
where $(i_{u})_{t}\in \mathbb{Z}_{p_u}$ for any $t$. Thus $i$ can be written as
\begin{equation}\label{Crteq}
    i=\sum_u(\bm{i_u})_{0...r_u-1} M_uN_u \mod d.
\end{equation}

Since $xy=d$ in Eq.(\ref{purekd}), without loss of generality, assume that 
\begin{equation*}
    \begin{split}
       &x=p_1^{r_{(1,x)}}p_2^{r_{(2,x)}}...p_n^{r_{(n,x)}},\\
       &y=p_1^{r_{(1,y)}}p_2^{r_{(2,y)}}...p_n^{r_{(n,y)}},
    \end{split}
\end{equation*}
where $r_{(t,x)} + r_{(t,y)} = r_t$.
Since  $i=ky+m$ in Eq.(\ref{purekd}), it means $m=i \mod y$. By direct calculation we obtain $m = \sum_u\left(\bm{i_u}\right)_{0...r_{(u,y)}-1}M_{u}N_{u}\mod y$ since $M_u=d/p_u^{r_u}$ and $M_up_u^t \mod y=0$, for $\forall t\geq r_{(u,y)}$. Therefore, Eq.(\ref{Crteq}) can be written as
\begin{equation}\label{epofiwithd}
    \begin{aligned}
        i=&\sum_u\left(\bm{i_u}\right)_{0...r_{(u,y)-1}}M_uN_u
        +\sum_up_u^{r_{(u,y)}}\left( \bm{i_u}\right)_{r_{(u,y)}...r_u-1}M_uN_u\\ =&\sum_u\left(\bm{i_u}\right)_{0...r_{(u,y)-1}}M_uN_u
        +\sum_u\left(\bm{i_u} \right)_{r_{(u,y)}...r_u-1}(x/p_u^{r_{(u,x)}})N_uy.
    \end{aligned}
\end{equation}
Compared the expression for the KD-classical pure states in Eq.(\ref{purekd}), we have

\begin{equation}\label{dindexi}
\begin{aligned}
    &m=\sum_u\left(\bm{i_u}\right)_{0...r_{(u,y)}-1}M_uN_u \mod y,\\
    &k=\sum_u\left(\bm{i_u} \right)_{r_{(u,y)}...r_u-1}(x/p_u^{r_{(u,x)}})N_u \mod x.
\end{aligned}
\end{equation}
Similarly, for the subscript $j=lx+s$ of the basis state $\ket{b_j}$, we have
\begin{equation}\label{dindexj}
\begin{aligned}
    &s=\sum_v\left(\bm{j_v}\right)_{0...r_{(v,x)}-1}M_vN_v \mod x,\\
    &l=\sum_v\left(\bm{j_v} \right)_{r_{(v,x)}...r_v-1}(y/p_v^{r_{(v,y)}})N_v \mod y.
\end{aligned}
\end{equation}
Note that here we use two numbers $r_{(u,y)}$ in Eq.(\ref{dindexi}) and $r_{(v,x)}$ in Eq.(\ref{dindexj}). If $r_{(u,y)}=0$, we set $(\bm{i_u})_{0...r_{(u,y)}-1}=0$. If $r_{(u,y)}=r_u$, we set $(\bm{i_u})_{r_{(u,y)}...r_u-1}=0$. $(\bm{j_v})$ are similar.

%待修改
%By the Chinese Remainder Theorem, the residue of $i$ modulo $d$, is determined by the $p$-adic coefficients in (\ref{Crteq})
%\begin{equation*}
%    \mathbb{I}_{i}=\bigcup_u\big\{i_{(u,0)}...,i_{(u,m_u-1)}\big\},
%\end{equation*}
%Here, we denote this coefficient system as $\mathbb{I}_{i}$,  and called it the index of $i$. By the decomposition $i=ky+m$, the analogous indices for $m$ and $k$, denoted $\mathbb{I}_m$ and $\mathbb{I}_k$, are similarly defined with (\ref{dindexi}).\par

We now justify the choice of $m,k,s,l$. Specifically, we verify that these choices ensure the validity of Eq.(\ref{purekd}). To this end, it suffices to show that $k$ and $s$ take $x$ distinct values modulo $x$, and that $m$ and $l$ take $y$ distinct values modulo $y$. The complete proof is provided in Appendix \ref{value}. \par

\subsection{KD-classical States in arbitrary $d$ dimensional Hilbert space}\label{Conofd}\mbox{}\par

In this subsection, we systematically investigate the structure of KD-classical states in $d$-dimension space. We first present a generalized form of Lemma \ref{lemma1}. 

Consider two factorizations of $d$, namely $d=xy=\widetilde{x}\widetilde{y}$, where $x=\widetilde{x}p_t$ and $p_t$ is a prime factor of $d$. Without loss of generality, assume that $t=1$. Then $x=\widetilde{x}p_1$ and $\widetilde{y}=yp_1$. Hence $r_{(1,x)}-1=r_{(1,\widetilde{x})}$ and $r_{(1,y)}+1=r_{(1,\widetilde{y})}$. For $2 \leq u, v \leq n$, $r_{(v,x)}=r_{(v,\widetilde{x})}$ and $r_{(u,y)}=r_{(u,\widetilde{y})}$.

Applying Eq.(\ref{dindexi}) and Eq.(\ref{dindexj}), the expressions of $m, s, \widetilde{m}, \widetilde{s}$ under the two factorizations of $d$ are
\begin{equation}\label{m-m'}
    \begin{split}
         &m=\sum_u\left(\bm{i_u}\right)_{0...r_{(u,y)}-1}M_uN_u \;\;,\;\;\widetilde{m}=\sum_u\left(\bm{i_u}\right)_{0...r_{(u,\widetilde{y})}-1}M_uN_u,\\
         &s=\sum_v\left(\bm{j_v}\right)_{0...r_{(v,x)}-1}M_vN_v \;\;,\;\; \widetilde{s}=\sum_v\left(\bm{j_v}\right)_{0...r_{(v,\widetilde{x})}-1}M_vN_v.
    \end{split}
    \end{equation}
And $\widetilde{m}=m+(i_1)_{r_{(1,y)}}p_1^{r_{(1,y)}} M_uN_u$, $s=\widetilde{s}+(j_1)_{r_{(1,x)}-1}p_1^{r_{(1,x)}-1} M_uN_u$.

\begin{lemma}\label{dlemma1}
    
For factorizations $d = xy = \widetilde{x}\widetilde{y}$,
    \begin{equation}\label{dlemma1main}
        \sum_{(j_1)_{r_{(1,x)}-1}}\psi_{ms}=\sum_{(i_1)_{r_{(1,y)}}}\psi_{\widetilde{m}\widetilde{s}},
    \end{equation}
where $x=p_1^{r_{(1,x)}}p_2^{r_{(2,x)}}...p_n^{r_{(n,x)}}$ and $x=\widetilde{x}p_1$.     
\end{lemma}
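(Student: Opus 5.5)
The plan is to prove \eqref{dlemma1main} by comparing the matrix elements of both sides in the basis $\mathcal{A}$. From Eq.(\ref{purekd}),
\begin{equation*}
\psi_{ms}=\frac{1}{x}\sum_{k,k'=0}^{x-1}\omega_x^{s(k-k')}\ket{a_{ky+m}}\bra{a_{k'y+m}} .
\end{equation*}
So $\bra{a_i}\psi_{ms}\ket{a_{i'}}$ vanishes unless $i\equiv i'\equiv m \pmod y$. In that case, writing $i-i'=(k-k')y$, it equals $\frac1x\omega_x^{s(k-k')}$. The same formula with $(\widetilde{x},\widetilde{y},\widetilde{k})$ describes $\psi_{\widetilde{m}\widetilde{s}}$. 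I will compute both sides of \eqref{dlemma1main} entrywise and show they coincide.

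\emph{Left-hand side.} By Eq.(\ref{m-m'}), as the digit $t=(j_1)_{r_{(1,x)}-1}$ runs over $\mathbb{Z}_{p_1}$, we have $s=\widetilde{s}+tc$ with $c=p_1^{r_{(1,x)}-1}M_1N_1$, while $m$ stays fixed. The sum is therefore $\frac1x\omega_x^{\widetilde{s}\Delta k}\sum_{t}\omega_x^{tc\Delta k}$, where $\Delta k=k-k'$. The key arithmetic claim is that $\omega_x^{c}$ is a primitive $p_1$-th root of unity. To see this, write $c/(x/p_1)=M_1N_1/\prod_{u\ge2}p_u^{r_{(u,x)}}$. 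This is an integer because $M_1$ contains each $p_u^{r_u}$ with $r_u\ge r_{(u,x)}$. It is nonzero modulo $p_1$ because $N_1$ is invertible mod $p_1$ and the remaining primes are coprime to $p_1$. Hence the character sum equals $p_1$ if $p_1\mid\Delta k$ and $0$ otherwise. Since $i-i'=\Delta k\, y$ and $\widetilde{y}=yp_1$, the condition $p_1\mid\Delta k$ is exactly $i\equiv i'\pmod{\widetilde{y}}$. So the left side has entries $\frac{p_1}{x}\omega_x^{\widetilde{s}\Delta k}=\frac1{\widetilde{x}}\omega_x^{\widetilde{s}\Delta k}$ when $i\equiv i'\equiv m\pmod y$ and $i\equiv i'\pmod{\widetilde{y}}$, and $0$ otherwise.

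\emph{Right-hand side.} As the digit $(i_1)_{r_{(1,y)}}$ runs over $\mathbb{Z}_{p_1}$, $\widetilde{m}=m+(i_1)_{r_{(1,y)}}p_1^{r_{(1,y)}}M_1N_1$ runs over the $p_1$ residues mod $\widetilde{y}$ that reduce to $m$ mod $y$; this uses the bijectivity established in Appendix \ref{value}. Each $\psi_{\widetilde{m}\widetilde{s}}$ has support on $\{a_i: i\equiv\widetilde{m} \pmod{\widetilde{y}}\}$, and these supports are disjoint for different $\widetilde{m}$. Hence the right side has entries $\frac1{\widetilde{x}}\omega_{\widetilde{x}}^{\widetilde{s}\Delta\widetilde{k}}$ exactly when $i\equiv i'\equiv m \pmod y$ and $i\equiv i'\pmod{\widetilde{y}}$, and $0$ otherwise. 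The supports of the two sides therefore agree.

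\emph{Matching the phases.} On this common support, $\Delta k\,y=\Delta\widetilde{k}\,\widetilde{y}$ gives $\Delta k=p_1\Delta\widetilde{k}$. Then $\omega_x^{\widetilde{s}\Delta k}=\omega_{\widetilde{x}}^{\widetilde{s}\Delta\widetilde{k}}$, provided the integer $\widetilde{s}$ used in the left-hand computation (the residue $s-tc$ mod $x$) is congruent mod $\widetilde{x}$ to the label $\widetilde{s}$ of the right-hand states. This holds because $x\mid p_1c$ implies $\widetilde{x}\mid c$, so every $s$ in the summation reduces to the same class mod $\widetilde{x}$.

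I expect the main obstacle to be the bookkeeping with CRT representatives: justifying that $\omega_x^{c}$ is a primitive $p_1$-th root of unity, and that the residues defined by Eqs.(\ref{dindexi})--(\ref{dindexj}) mod $x$, $y$, $\widetilde{x}$, $\widetilde{y}$ are mutually consistent. This is where the divisibility facts about $M_1N_1$ enter, and it is the step I would verify first.
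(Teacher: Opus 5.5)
Your proof is correct and follows essentially the same route as the paper. Both arguments expand the two sides entrywise in the basis $\mathcal{A}$ and sum the character over the top digit of $j_1$, so that only entries with $i\equiv i'\pmod{\widetilde y}$ survive, with weight $p_1/x=1/\widetilde{x}$; both then match supports and phases with $\sum\psi_{\widetilde{m}\widetilde{s}}$. Where the paper splits $s(k-k')$ into CRT digits as $Sum(1)+Sum(2)+Sum(3)$ and compares digit by digit, you use three facts directly: $\omega_x^{c}$ with $c=p_1^{r_{(1,x)}-1}M_1N_1$ is a primitive $p_1$-th root of unity, $\widetilde{x}\mid c$, and $\Delta k=p_1\Delta\widetilde{k}$, which is a cleaner way to do the same bookkeeping.
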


Observing that due to the arbitrariness of $p_1$, Lemma \ref{dlemma1} remains applicable in essence as long as $x$ and ${\widetilde{x}}$ differ by exactly one prime factor $p_u$ of $d$ from each other. The complete proof of Lemma \ref{dlemma1} is given in Appendix \ref{The proof of Lemma5}.

We now define a directed graph to describe our result on KD-classical pure states. Given a factorization of $d={x_0}{y_0}$ with $\mathrm{gcd}({x_0},{y_0})=1$, without loss of generality, assume that ${x_0}=p_1^{r_1}...p_{\alpha}^{r_{\alpha}}$ and ${y_0}=p_{{\alpha}+1}^{r_{{\alpha}+1}}...p_n^{r_n}$.
Define a directed graph $G({x_0})=(V,E)$. A vertex in set $V$ corresponds to a set of KD-classical pure states which is determined by a factorization of $d$, i.e.,
\begin{equation}
    v_{x}=\bigg\{\ket{\psi_{ms}}\bra{\psi_{ms}}\bigg|\ket{\psi_{ms}}=\frac{1}{\sqrt{x}}\sum_{k=0}^{x-1}\omega^{sk}\ket{a_{k y+m}}\bigg\}
\end{equation}
with $d=xy$. Obviously, this collection $V$ consists precisely of those KD-classical pure states. The number of vertices in $V$ is equal to the number of factorizations of $d$. $E$ is the set of directed edges. There exists a directed edge between vertex $v_{x}$ and vertex $v_{\widetilde{x}}$ if and only if there exists a prime factor $p_i$ of $d$ such that $x=\widetilde{x}p_i$.
If $p_i\mid {x_0} $, the direction of the directed edge is defined to $v_{x}\rightarrow v_{\widetilde{x}}$. Otherwise, its direction is defined to $v_{\widetilde{x}}\rightarrow v_{x}$.
It is straightforward to verify that the vertex $v_{{x_0}}$ has only outgoing edges and the vertex $v_{{y_0}}$ has only incoming edges.

A path in the directed graph is an ordered sequence of vertices $(v_1,v_2...v_n)$, where for every pair of adjacent vertices $v_t$ and $v_{t+1}$, there exists a directed edge from $v_t$ to $v_{t+1}$. For a path $(v_1,v_2...v_n)$, define
\begin{equation*}
    v_{path}=\bigcup_{i=1}^nv_i.
\end{equation*}
Since each $v_i$ represents a set of KD-classical pure states,  $v_{path}$ is also a set of KD-classical pure states. The following example shows how to construct a graph.\par
\begin{example} \label{example}
For $d=2^23^3$ and ${x_0}=2^2$, the graph is\par
    \centering
    \begin{tikzpicture}[
    > = {Stealth[scale=0.8]},     % 箭头样式
    node distance = 2.0cm,        % 节点间距
    every node/.style = {         % 节点默认样式
        circle,
        draw,                     % 描边
        minimum size = 13mm,        % 最小尺寸
        font = \small
    }]
    \node (A) at (0,0) {$2^2$};
    \node (B) [right of=A] {$2^23$};
    \node (C) [right of=B] {$2^23^2$};
    \node (D) [below of=A] {$2$};
    \node (E) [right of=D] {$2*3$};
    \node (F) [right of=E] {$2*3^2$};
    \node (G) [below of=D] {$1$};
    \node (H) [right of=G] {$3$};
    \node (I) [right of=H] {$3^2$};
    \node (J) [right of=C] {$2^23^3$};
    \node (K) [right of=F] {$2*3^3$};
    \node (L) [right of=I] {$3^3$};

    % 绘制有向边（使用 -> 表示方向）
    \draw[->,thick] (A) -- (B) ; % 添加标签
    \draw[->,thick] (B) -- (C) ;
    \draw[->,thick] (D) -- (E) ;
    \draw[->,thick] (E) -- (F) ;
    \draw[->,red,thick] (G) -- (H) ;
    \draw[->,red,thick] (H) -- (I) ;
    \draw[->,thick] (J) -- (K) ;
    \draw[->,thick] (K) -- (L) ;
    \draw[->,red,thick] (A) -- (D) ;
    \draw[->,red,thick] (D) -- (G) ;
    \draw[->,thick] (C) -- (J) ;
    \draw[->,thick] (B) -- (E) ;
    \draw[->,thick] (E) -- (H) ;
    \draw[->,thick] (F) -- (K) ;
    \draw[->,thick] (C) -- (F) ;
    \draw[->,thick] (F) -- (I) ;
    \draw[->,red,thick] (I) -- (L) ; 
\end{tikzpicture}
\end{example}

In this graph, we use red arrows to mark out a path
\begin{equation}\label{path-example}
    v_{{2^2}}\rightarrow v_{{2}}\rightarrow v_{{1}}\rightarrow v_{{3}}\rightarrow v_{{3^2}}\rightarrow v_{{3^3}}.
\end{equation}
For this path, $v_{path}$ is the union of the sets of KD-classical pure states corresponding to these vertices on the path.\par

%Recall that for $d=x_0y_0$ with $\mathrm{gcd}(x_0,y_0)=1$, we can assume that ${x_0}=p_1^{r_1}...p_{\alpha}^{r_{\alpha}}$ and ${y_0}=p_{{\alpha}+1}^{r_{{\alpha}+1}}...p_n^{r_n}$. We give the next theorem: 

\begin{theorem}\label{dthm}
Suppose $d={x_0}{y_0}$ with $\mathrm{gcd}({x_0},{y_0})=1$. For any path $v_{path}$ from $v_{x_0}$ to $v_{y_0}$ in graph $G({x_0})$,
    \begin{equation}
        \mathrm{KD}_{\mathcal{A,B}}^+\bigcap \mathrm{span}_{\mathbb{R}}(v_{path})=\mathrm{conv}(v_{path}).
    \end{equation}
\end{theorem}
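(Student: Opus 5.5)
The plan follows the strategy of Ref.\cite{langrenez2024characterizing,yang2024geometry}, with Lemma \ref{dlemma1} as the only algebraic input. The inclusion $\mathrm{conv}(v_{path})\subseteq \mathrm{KD}_{\mathcal{A,B}}^+\cap\mathrm{span}_{\mathbb{R}}(v_{path})$ is immediate, so the work is the reverse inclusion.

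\textbf{Setup.} Let the path be $v_{x_0}=v_{x^{(0)}}\to v_{x^{(1)}}\to\cdots\to v_{x^{(L)}}=v_{y_0}$. Each step either divides $x^{(t)}$ by a prime of $x_0$ or multiplies it by a prime of $y_0$. Take $F\in\mathrm{KD}_{\mathcal{A,B}}^+\cap\mathrm{span}_{\mathbb{R}}(v_{path})$ and write
\begin{equation*}
F=\sum_{t=0}^{L}\sum_{m,s}c^{(t)}_{ms}\,\psi^{(t)}_{ms}, \qquad c^{(t)}_{ms}\in\mathbb{R}.
\end{equation*}
By Lemma \ref{property}, $Q_{ij}(\psi^{(t)}_{ms})$ is supported on the single pair of residues $m=i \bmod y^{(t)}$, $s=j \bmod x^{(t)}$, where it equals $1/d$. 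Hence
\begin{equation*}
d\,Q_{ij}(F)=\sum_{t=0}^{L} c^{(t)}_{m_t(i),\,s_t(j)},
\end{equation*}
and the hypothesis is that this sum is $\geq 0$ for all $i,j$. By Eqs.(\ref{dindexi})--(\ref{dindexj}), $m_t(i)$ and $s_t(j)$ are read off from a set of CRT/$p_u$-adic digits of $i$ and $j$. Moving along an edge adds or removes exactly one digit (of $i$ or of $j$). The goal is to show that the representation can be changed, without changing $F$, so that all coefficients are nonnegative. Then $\mathrm{tr}F=1$ gives $\sum c=1$, and $F\in\mathrm{conv}(v_{path})$.

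\textbf{Gauge freedom.} Lemma \ref{dlemma1} applied to the edge $t\to t+1$ says the following. Fix all indices of $\psi^{(t)}$ except the one digit $(j_u)_{\cdot}$ summed on the left of (\ref{dlemma1main}), and fix the corresponding coarse indices of $\psi^{(t+1)}$. Then adding $\lambda$ to every $c^{(t)}$ in that fibre and subtracting $\lambda$ from every $c^{(t+1)}$ in the matching fibre (over the digit $(i_u)_{\cdot}$) leaves $F$ unchanged. If the edge goes in the reverse orientation, the same holds with the roles of the $i$- and $j$-digits swapped.

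\textbf{Normalization sweep.} For $t=0,1,\dots,L-1$ in turn, choose $\lambda$ on each fibre of vertex $t$ so that the minimum of $c^{(t)}$ over that fibre becomes exactly $0$, pushing the compensation onto vertex $t+1$. After the sweep, every vertex $t<L$ has $c^{(t)}\geq0$, and each fibre contains a zero. Only the last vertex $v_{y_0}$ may still carry negative coefficients.

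\textbf{Positivity at the last vertex (main obstacle).} Fix $(m,s)$ at $v_{y_0}$. I will construct $(i,j)$ with $m_L(i)=m$, $s_L(j)=s$ and $c^{(t)}_{m_t(i),s_t(j)}=0$ for every $t<L$. Then $0\le d\,Q_{ij}(F)=c^{(L)}_{ms}$, as required. The construction runs backwards from $t=L-1$ to $0$. At each step the fibre minimum at vertex $t$ is attained at some value of the free digit, and I fix that digit accordingly.

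The delicate point is compatibility: the digit fixed at step $t$ must not already have been determined by the constraints from vertices $t+1,\dots,L$. It must also not be changed by later choices in a way that destroys zeros already secured. Here the orientation convention of $G(x_0)$ is used. Along a path from $v_{x_0}$ to $v_{y_0}$, the $i$-digits in the $x_0$-part are released one at a time and the $j$-digits in the $y_0$-part are acquired one at a time, each exactly once. Combined with the CRT independence of digits of different primes (Appendix \ref{value}), this should guarantee that the digit freed at edge $t$ is genuinely free relative to all later vertices. I expect this bookkeeping of which digits are pinned by which vertex to be the main obstacle. I would first carry it out on Example \ref{example} to fix the indexing, then write it as an induction on $L$.
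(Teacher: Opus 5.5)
Your proposal is correct and is essentially the paper's proof. The paper's Step 1 is your normalization sweep, pushing fibre minima forward along the path via Lemma~\ref{dlemma1}. Its Step 2 traces the minimizers backward to build $(i',j')$ with $d\,Q_{i'j'}(\rho)$ equal to the last coefficient; your chain of zero new coefficients is the same thing as its telescoped sum of original $\lambda$'s.

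The compatibility step you flag is immediate, but your description of which digits are released is partly off. The digit leaving the index at an edge is a $j_u$-digit when $p_u\mid x_0$ and an $i_u$-digit when $p_u\mid y_0$. The $i_u$-digits with $p_u\mid x_0$ are acquired, not released; together with the $j_u$-digits with $p_u\mid y_0$, they are exactly the digits pinned by $(m,s)$ at $v_{y_0}$. Since every $p_u$-exponent of $x$ moves monotonically along a path in $G(x_0)$, a released digit never re-enters a later index. So your backward choices are consistent and determine all digits of $(i,j)$ for an arbitrary path, whereas the paper writes the argument out only for one representative path.
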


In order to easily understand the proof of  $\mathrm{KD}_{\mathcal{A,B}}^+\bigcap \mathrm{span}_{\mathbb{R}}(v_{path}) \subseteq \mathrm{conv}(v_{path})$, we briefly outline the proof idea using the path of Example \ref{example}.
For a state $\rho\in \mathrm{span}_{\mathbb{R}}(v_{path})$ , $\rho$ can be expressed as a real linear combination of KD-classical states on $v_{path}$.  Using Lemma \ref{dlemma1}, the coefficients of KD-classical states in $v_{2^2}$ and $v_{2}$ are changed. Importantly, the coefficients of KD-classical states in $v_{2^2}$ become non-negative. Then applying the same method to renew the coefficients of KD-classical states in $v_{2}$ and $v_{1}$, the coefficients in $v_{2}$ non-negative. Following the same procedure, we can make all coefficients non-negative except for those in $v_{3^3}$. Finally, we use condition $\mathrm{KD}_{\mathcal{A,B}}^+$ to show that coefficients in $v_{3^3}$ is non-negative. Combined with the condition that $\rho$ is a quantum state, we establish $\rho\in\mathrm{conv}(v_{path})$.\par
The complete proof of Theorem \ref{dthm} is given in Appendix \ref{proofofthm2}.\par

Now we apply Theorem \ref{dthm} to consider Theorem \ref{thm1}. Suppose $d=p^r=x_0y_0$. We have $x_0=1$ or $x_0=p^r$ since $(x_0,y_0)=1$. Assume $x_0=1$ in Example \ref{example2}.

\begin{example} \label{example2}
For $d=p^r$ and $x_0=1$, the graph $G(x_0)$ is\par

    \centering
    \begin{tikzpicture}[
    > = {Stealth[scale=0.8]},     % 箭头样式
    node distance = 2.0cm,        % 增加整体间距
    every node/.style = {         % 节点默认样式
        circle,
        draw,                     % 描边
        minimum size = 13mm,      % 最小尺寸
        font = \small
    }
]
    \node (A) at (0,0) {$1$};
    \node (B) [right of=A] {$p$};
    \node (dotsBC) [right of=B, draw=none, minimum size=0] {$\cdots$};
    \node (C) [right of=dotsBC] {$p^r$};

    % 绘制有向边
    \draw[->,thick] (A) -- (B);
    \draw[->,thick] (B) -- (dotsBC);
    \draw[->,thick] (dotsBC) -- (C);
   
\end{tikzpicture}
\end{example}
There is only one path $(v_1,v_p,...,v_{p^r})$ from vertex $v_1$ to vertex $v_{p^r}$. All the KD-classical pure states are on this path, i.e., $v_{path} = \mathrm{pure}(\mathrm{KD}_{\mathcal{A,B}}^+)$.  By Theorem \ref{dthm}, we have $\mathrm{KD}_{\mathcal{A,B}}^+\bigcap \mathrm{span}_{\mathbb{R}}(v_{path})=\mathrm{conv}(v_{path})$. Since $\mathrm{KD}_{\mathcal{A,B}}^r=\mathrm{span}_{\mathbb{R}}\left(\mathrm{pure}(\mathrm{KD}_{\mathcal{A,B}}^+)\right)$ and $\mathrm{KD}_{\mathcal{A,B}}^+ \subseteq \mathrm{KD}_{\mathcal{A,B}}^r$, it follows $\mathrm{KD}_{\mathcal{A,B}}^+=\mathrm{conv}\left(\mathrm{pure}(\mathrm{KD}_{\mathcal{A,B}}^+)\right)$ that is Theorem \ref{thm1}.\par

For $d=pq$ with prime $p,q$, according to the four factorizations of $d$, there are four groups of KD-classical pure states, says $A,B,C,D$. Theorem 2 in Ref.\cite{yang2024geometry} shows that $\mathrm{KD}_{\mathcal{A,B}}^+\bigcap \mathrm{span}_{\mathbb{R}}(X \bigcup Y \bigcup Z)=\mathrm{conv}(X \bigcup Y \bigcup Z)$, where $X,Y,Z$ are any three sets of $A,B,C,D$.
The following example shows that Theorem 2 in Ref.\cite{yang2024geometry} is a special case of our Theorem \ref{dthm}.\par
\begin{example}The two graphs below are $G(x_{pq})$ and $G(x_p)$ in $pq$-dimensional space. \par

    \centering
    \begin{tikzpicture}[
    > = {Stealth[scale=0.8]},     % 箭头样式
    node distance = 2.0cm,        % 节点间距
    every node/.style = {         % 节点默认样式
        circle,
        draw,                     % 描边
        minimum size = 13mm,        % 最小尺寸
        font = \small
    }
]
    \node (A) at (0,0) {$pq$};
    \node (B) [right of=A] {$q$};
    \node (C) [right of=B] {$p$};
    \node (D) [right of=C] {$1$};
    
    \node (E) [below of=A] {$q$};
    \node (F) [below of=B] {$1$};
    \node (G) [below of=C] {$pq$};
    \node (H) [below of=D] {$q$};

    % 绘制有向边（使用 -> 表示方向）
    \draw[->,red,thick] (A) -- (B) ; % 添加标签
    \draw[->,red,thick] (B) -- (F) ;
    \draw[->,green,thick] (A) -- (E) ;
    \draw[->,green,thick] (E) -- (F) ;

    \draw[->,blue,thick] (C) -- (D) ;
    \draw[->,blue,thick] (D) -- (H) ;
    \draw[->,thick] (C) -- (G) ;
    \draw[->,thick] (G) -- (H) ;
\end{tikzpicture}
\end{example}
Four different color paths, i.e., red, green, blue and black, correspond to four cases of Theorem 2 in Ref.\cite{yang2024geometry}.

\section{Conclusions and discussions}\label{conclusion-diss}

In this paper, by introducing directed graphs into the analysis of KD-quasiprobability distributions, we systematically investigate the properties of KD-classical states with the DFT matrix as the transition matrix between two given bases. Our primary results consist of two key contributions. First, we provide an alternative proof for the conclusion that the set of KD-classical states in a $p^r$-dimensional Hilbert space is the convex hull of KD-classical pure states, which validates this existing result from a new analytical perspective. Second, given that this convex hull property does not hold for Hilbert spaces of arbitrary dimension, we define a directed graph tailored to the prime factorization of the dimension $d$, and prove a general structural result: the convex hull of KD-classical pure states on any path from the start vertex to the end vertex in this graph is the intersection of the KD-classical state set and the real linear space spanned by the KD-classical pure states on that path. 
This general result is highly inclusive: it not only recovers the 
$p^r$-dimensional conclusion as a special case, but also subsumes Theorem 2 in the previous work \cite{yang2024geometry}, unifying these scattered results within a unified theoretical framework.

The study of KD-classicality is of fundamental significance for clarifying the distinction between classical and quantum physical processes. Our work provides a novel theoretical tool for investigating the structure of KD classical states. It is already known that the equality $\mathrm{KD}_{\mathcal{A,B}}^+=\mathrm{conv}(\mathrm{pure}(\mathrm{KD}_{\mathcal{A,B}}^+))$ does not hold for Hilbert spaces of arbitrary dimension, and the precise structural form of the KD-classical state set for such dimensions remains an open question. We anticipate that the directed graph approach and the general structural theorem established in this paper can serve as effective analytical tools for further exploring this unsolved problem. In addition, a natural direction for future work is to extend the analysis to KD-classical states associated with non-DFT transition matrices.

\backmatter

\bmhead{Acknowledgements}

This work is supported by the National Natural Science Foundation of China under Grant No. 12501639, Natural Science Foundation of Henan (252300420374), and Key Lab of Guangzhou for Quantum Precision Measurement under Grant No. 202201000010.

%%===================================================%%
%% For presentation purpose, we have included        %%
%% \bigskip command. Please ignore this.             %%
%%===================================================%%
\bigskip

\begin{appendices}

\section{The proof of Lemma \ref{propofKD} }\label{prop}
Recall that the KD-classical pure states have the form below
\begin{equation*}
    \ket{\psi_{ms}}=\frac{1}{\sqrt{x}}\sum_{k=0}^{x-1}\omega_x^{sk}\ket{a_{ky+m}}=\frac{1}{\sqrt{y}}\omega_d^{-ms}\sum_{l=0}^{y-1}\omega_y^{-ml}\ket{b_{lx+s}}, m\in \mathbb{Z}_y,s \in \mathbb{Z}_{x}.
\end{equation*}
Suppose $i=k'y+m'$ and $j=l'x+s'$ since $i = m' \mod y$ and $j = s' \mod x$. Then
\begin{equation*}
    \braket{a_i|\psi_{ms}}=\bra{a_{k'y+m'}}\frac{1}{\sqrt{x}}\sum_{k=0}^{x-1}\omega_x^{sk}\ket{a_{ky+m}}.
\end{equation*}
If $m\neq m'$, it is easy to check that $\braket{a_i|\psi_{ms}}=0$. If $m=m'$, $\braket{a_i|\psi_{ms}}=\frac{1}{\sqrt{x}}\omega_x^{sk'}$.
Similarly, $\braket{b_{j}|\psi_{ms'}}=\frac{1}{\sqrt{y}}\omega_d^{ms'}\omega_y^{ml'}$ if $s=s'$, and
$\braket{b_{j}|\psi_{ms'}}=0$ if $s\neq s'$.
It follows that
\begin{equation*}
    \bra{a_i}\psi_{ms}\ket{b_j}\braket{b_j|a_i}=0 \quad \text{if}\quad m\neq m'\quad \text{or} \quad s\neq s',
\end{equation*}
and
\begin{equation*}
   \bra{a_i}\psi_{ms}\ket{b_j}\braket{b_j|a_i}=\frac{1}{d}\omega_x^{s'k'}\omega_d^{m's'}\omega_y^{m'l'}\omega_{d}^{-ij}=\frac{1}{d}
\end{equation*}
if $m= m'$ and $s=s'$.
Here, we use
\begin{equation*}
    \omega_{d}^{ij}=\omega_{d}^{(k'y+m')(l'x+s')}=\omega_{d}^{m'l'x+k's'y+m's'}=\omega_x^{s'k'}\omega_d^{m's'}\omega_y^{m'l'}.
\end{equation*}
That completes the proof.

%%==========================================%%

\section{The proof of Lemma \ref{lemma1}}
\label{The proof of Lemma1}
\begin{proof}
        For the left-hand side of the equality, from Eq.(\ref{p_use}), we obtain        
        \begin{equation*}
        \begin{aligned}  
            \psi_{\substack{\bm{i}_{0...\beta-1}\\\bm{j}_{0...\alpha-1}}}
            =&\frac{1}{p^{\alpha}}\sum_{\substack{\bm{i}_{\beta...r-1}\\\bm{i}'_{\beta...r-1}}}
            \omega_{p^{\alpha}}^{\Delta^{\bm{i}\bm{i'}}_{\beta...r-1}\bm{j}_{0...\alpha-1}}
            \ket{a_{\bm{i}_{0...\beta-1}+p^{\beta}\bm{i}_{\beta...r-1}}}
            \bra{a_{\bm{i}_{0...\beta-1}+p^{\beta}\bm{i}'_{\beta...r-1}}}
        \end{aligned}
        \end{equation*}
        Note that here we use the notation $\Delta^{\bm{a}\bm{a}'}_{\beta...r-1}=\bm{a}_{\beta...r-1}-\bm{a}'_{\beta...r-1}$. By direct calculation, we have
        \begin{equation*}
            \omega_{p^{\alpha}}^{\Delta^{\bm{i}\bm{i'}}_{\beta...r-1}\bm{j}_{0...\alpha-1}}
            =\omega_{p^{\alpha}}^{\Delta^{\bm{i}\bm{i'}}_{\beta...r-1}\bm{j}_{0...\alpha-2}}
            \omega_{p^{\alpha}}^{\Delta^{\bm{i}\bm{i'}}_{\beta}j_{\alpha-1}p^{\alpha-1}}
            \omega_{p^{\alpha}}^{\Delta^{\bm{i}\bm{i'}}_{\beta+1...r-1}j_{\alpha-1}p^{\alpha}}            
            =\omega_{p^{\alpha}}^{\Delta^{\bm{i}\bm{i'}}_{\beta...r-1}\bm{j}_{0...\alpha-2}}
            \omega_{p^{\alpha}}^{\Delta^{\bm{i}\bm{i'}}_{\beta}j_{\alpha-1}p^{\alpha-1}}.
        \end{equation*}
%        For simplicity, we denote $\omega_{p^{\alpha}}^{\Delta^{\bm{i}\bm{i'}}_{\beta}j_{\alpha-1}p^{\alpha-1}} \omega_{p^{\alpha}}^{\Delta^{\bm{i}\bm{i'}}_{\beta...r-1}\bm{j}_{0...\alpha-2}}
%        \ket{a_{\bm{i}_{0...\beta-1}+p^{\beta}\bm{i}_{\beta...r-1}}}
%            \bra{a_{\bm{i}_{0...\beta-1}+p^{\beta}\bm{i}'_{\beta...r-1}}}$ by 
%            $A_{\substack{\bm{i}_{\beta...r-1}\bm{i}'_{\beta...r-1}}}$.
        Thus
            \begin{equation}\label{doperator1}
            \begin{aligned}  
                 \psi_{\substack{\bm{i}_{0...\beta-1}\\\bm{j}_{0...\alpha-1}}}
                 =&\frac{1}{p^{\alpha}}\sum_{\substack{\bm{i}_{\beta...r-1}\\\bm{i}'_{\beta...r-1}}}
                 \omega_{p^{\alpha}}^{\Delta^{\bm{i}\bm{i'}}_{\beta...r-1}\bm{j}_{0...\alpha-2}}
                 \omega_{p^{\alpha}}^{\Delta^{\bm{i}\bm{i'}}_{\beta}j_{\alpha-1}p^{\alpha-1}} 
        \ket{a_{\bm{i}_{0...\beta-1}+p^{\beta}\bm{i}_{\beta...r-1}}}
            \bra{a_{\bm{i}_{0...\beta-1}+p^{\beta}\bm{i}'_{\beta...r-1}}}\\
            :=& \frac{1}{p^{\alpha}}\sum_{\substack{\bm{i}_{\beta...r-1}\\\bm{i}'_{\beta...r-1}}}
            A_{\substack{\bm{i}_{\beta...r-1} \\ \bm{i}'_{\beta...r-1}}}
            \end{aligned}
            \end{equation}
        
        The summation over $j_{\alpha-1}$ is in Eq.(\ref{doperator1}), i.e.,
        \begin{equation}\label{over-j}
            \sum_{j_{\alpha-1}}\psi_{\substack{\bm{i}_{0...\beta-1}\\\bm{j}_{0...\alpha-1}}}
            =\frac{1}{p^{\alpha}}\sum_{j_{\alpha-1}}\sum_{\substack{\bm{i}_{\beta...r-1}\\\bm{i}'_{\beta...r-1}}}A_{\substack{\bm{i}_{\beta...r-1}\\\bm{i}'_{\beta...r-1}}}
            =\frac{1}{p^{\alpha}}\sum_{j_{\alpha-1}}\big\{\sum_{i_{\beta}=i_{\beta}'}+\sum_{i_{\beta}\neq i_{\beta}'}\big\}\sum_{\substack{\bm{i}_{\beta+1...r-1}\\\bm{i}'_{\beta+1...r-1}}}A_{\substack{\bm{i}_{\beta...r-1}\\\bm{i}'_{\beta...r-1}}}.
        \end{equation}
        Note that the 2nd equality follows from $\sum_{\bm{i}_{\beta...r-1}}=\sum_{i_{\beta}}\sum_{\bm{i}_{\beta+1...r-1}}$
        since $\bm{i}_{\beta...r-1}=i_{\beta}+p\bm{i}_{\beta+1...r-1}$.

        We now discuss the two cases $i_{\beta}=i_{\beta}'$ and $i_{\beta}\neq i_{\beta}'$, respectively.
        
        \textbf{Case 1}. $i_{\beta}=i_{\beta}'$, i.e., $\Delta^{\bm{i}\bm{i'}}_{\beta}=0$. The coefficient in Eq.(\ref{doperator1}) can be simplified to 
        \begin{equation*}
            \omega_{p^{\alpha}}^{\Delta^{\bm{i}\bm{i'}}_{\beta...r-1}\bm{j}_{0...\alpha-2}}
        =\omega_{p^{\alpha}}^{(\Delta^{\bm{i}\bm{i'}}_{\beta}+p\Delta^{\bm{i}\bm{i'}}_{\beta+1...r-1})\bm{j}_{0...\alpha-2}}
        =\omega_{p^{\alpha}}^{p\Delta^{\bm{i}\bm{i'}}_{\beta+1...r-1}\bm{j}_{0...\alpha-2}}
        =\omega_{p^{\alpha-1}}^{\Delta^{\bm{i}\bm{i'}}_{\beta+1...r-1}\bm{j}_{0...\alpha-2}}.
        \end{equation*}
        Notice that the simplified coefficient is independent of $j_{\alpha-1}$ and 
        $a_{\bm{i}_{0...\beta-1}+p^{\beta}\bm{i}'_{\beta...r-1}}=a_{\bm{i}_{0...\beta}+p^{\beta+1}\bm{i}'_{\beta+1...r-1}}$. Therefore, 
        \begin{equation}
        \begin{aligned}
        &\frac{1}{p^{\alpha}}\sum_{j_{\alpha-1}}\sum_{i_{\beta}=i_{\beta}'}
        \sum_{\substack{\bm{i}_{\beta+1...r-1}\\\bm{i}'_{\beta+1...r-1}}}A_{\substack{\bm{i}_{\beta...r-1}\\\bm{i}'_{\beta...r-1}}}\\
%            =&\frac{1}{p^{\alpha}}\sum_{j_{\alpha-1}}\sum_{i_{\beta}=i_{\beta}'}
%            \sum_{\substack{\bm{i}_{\beta+1...r-1}\\\bm{i}'_{\beta+1...r-1}}}
%            \omega_{p^{\alpha}}^{\Delta^{\bm{i}\bm{i'}}_{\beta...r-1}\bm{j}_{0...\alpha-2}}\ket{a_{\bm{i}_{\beta...r-1}p^{\beta}+\bm{i}_{0...\beta-1}}}
%            \bra{a_{\bm{i}'_{\beta...r-1}p^{\beta}+\bm{i}_{0...\beta-1}}}\\
%            =&\frac{1}{p^{\alpha}}\sum_{j_{\alpha-1}}\sum_{i_{\beta}=i_{\beta}'}
%            \sum_{\substack{\bm{i}_{\beta+1...r-1}\\\bm{i}'_{\beta+1...r-1}}}
%         \omega_{p^{\alpha}}^{(\Delta^{\bm{i}\bm{i'}}_{\beta}+p\Delta^{\bm{i}\bm{i'}}_{\beta+1...r-1})\bm{j}_{0...\alpha-2}}
%            \ket{a_{\bm{i}_{\beta...r-1}p^{\beta}+\bm{i}_{0...\beta-1}}}
%            \bra{a_{\bm{i}'_{\beta...r-1}p^{\beta}+\bm{i}_{0...\beta-1}}}\\
           =&\frac{1}{p^{\alpha-1}}\sum_{i_{\beta}}
           \sum_{\substack{\bm{i}_{\beta+1...r-1}\\\bm{i}'_{\beta+1...r-1}}}
           \omega_{p^{\alpha-1}}^{\Delta^{\bm{i}\bm{i'}}_{\beta+1...r-1}\bm{j}_{0...\alpha-2}}
           \ket{a_{\bm{i}_{0...\beta}+p^{\beta+1}\bm{i}_{\beta+1...r-1}}}
            \bra{a_{\bm{i}_{0...\beta}+p^{\beta+1}\bm{i}'_{\beta+1...r-1}}}\\ 
           %\omega_{p^{t_1}-1}^{(i_{t_2+1}-i_{t_2+1}',...i_{M-1}-i_{M-1}')(j_0...j_{t_1-2})}\\
           %&\times\ket{a_{(i_{t_2}...i_{M-1})p^{t_2}+(i_0...i_{t_2-1})}}
           %\bra{a_{(i_{t_2}'...i_{M-1}')p^{t_2}+(i_0...i_{t_2-1})}}.  
        \end{aligned}
        \end{equation}
       
       \textbf{Case 2} . $i_{\beta}\ne i_{\beta}'$, i.e., $\Delta^{\bm{i}\bm{i'}}_{\beta}\neq0$.
       
       The coefficient in Eq.(\ref{doperator1}) has the following property when $\Delta^{\bm{i}\bm{i'}}_{\beta}\neq0$,
       \begin{equation*}
           \sum_{j_{\alpha-1}} 
       \omega_{p^{\alpha}}^{\Delta^{\bm{i}\bm{i'}}_{\beta...r-1}\bm{j}_{0...\alpha-2}}
       \omega_{p^{\alpha}}^{\Delta^{\bm{i}\bm{i'}}_{\beta}j_{\alpha-1}p^{\alpha-1}}
       =\omega_{p^{\alpha}}^{\Delta^{\bm{i}\bm{i'}}_{\beta...r-1}\bm{j}_{0...\alpha-2}}
       \sum_{j_{\alpha-1}} \omega_{p}^{\Delta^{\bm{i}\bm{i'}}_{\beta}j_{\alpha-1}}=0.
       \end{equation*}
       Thus
       \begin{equation}\label{r}
           \begin{aligned}
               \frac{1}{p^{\alpha}}\sum_{j_{\alpha-1}}\sum_{i_{\beta}\neq i_{\beta}'}
               \sum_{\substack{\bm{i}_{\beta+1...r-1}\\\bm{i}'_{\beta+1...r-1}}}A_{\substack{\bm{i}_{\beta...r-1}\\\bm{i}'_{\beta...r-1}}}=0.
            %\omega_{p^{t_1}}^{(i_{t_2}-i_{t_2}')j_{t_1-1}^*p^{t_1-1}}\\
            %&\times\omega_{p^{t_1}}^{(i_{t_2}-i_{t_2}'...i_{M-1}-i_{M-1}')(j_0...j_{t_1-2})}\\
            %&\times\ket{a_{(i_{t_2}...i_{M-1})p^{t_2}+(i_0...i_{t_2-1})}}
            %\bra{a_{(i_{t_2}'...i_{M-1}')p^{t_2}+(i_0...i_{t_2-1})}}\\
%            =&\frac{1}{p^{\alpha}}\sum_{\substack{\bm{i}_{\beta+1...r-1}\\\bm{i}'_{\beta+1...r-1}}}\sum_{i_{\beta}\neq i_{\beta}'}
%            \big\{\sum_{j_{\alpha-1}}\omega_{p^{\alpha}}^{\Delta^{\bm{i}\bm{i'}}_{\beta}j_{\alpha-1}p^{\alpha-1}}\big\}\\
%            &\times\omega_{p^{\alpha}}^{\Delta^{\bm{i}\bm{i'}}_{\beta...r-1}\bm{j}_{0...\alpha-2}}\\   &\times\ket{a_{\bm{i}_{\beta...r-1}p^{\beta}+\bm{i}_{0...\beta-1}}}
%            \bra{a_{\bm{i}'_{\beta...r-1}p^{\beta}+\bm{i}_{0...\beta-1}}}\\ 
%            =&0.
           \end{aligned}
       \end{equation}
%    The last equality in equation (\ref{r}) holds because $i_{\beta}\ne i_{\beta}'$ and $\omega_{p^{\alpha}}$ is a $p^{\alpha}$-th root of unity, thereby leading to..
       
%       \begin{equation}
%            \{\sum_{j_{\alpha-1}}\omega_{p^{\alpha}}^{\Delta^{\bm{i}\bm{i'}}_{\beta}j_{\alpha-1}p^{\alpha-1}}\}=
%            \{\sum_{j_{\alpha-1}}\omega_{p}^{\Delta^{\bm{i}\bm{i'}}_{\beta}j_{\alpha-1}}\}=0.
%        \end{equation}

By the discussion in Case 1 and Case 2, Eq.(\ref{over-j}) can be written as
\begin{equation}\label{lemmarleft}
    \begin{aligned}
        \sum_{j_{\alpha-1}}\psi_{\substack{\bm{i}_{0...\beta-1}\\\bm{j}_{0...\alpha-1}}}
        =&\frac{1}{p^{\alpha-1}}\sum_{i_{\beta}}
        \sum_{\substack{\bm{i}_{\beta+1...r-1}\\\bm{i}_{\beta+1...r-1}'}}
           \omega_{p^{\alpha}-1}^{\Delta^{\bm{i}\bm{i'}}_{\beta+
           1...r-1}\bm{j}_{0...\alpha-2}}
           \ket{a_{\bm{i}_{0...\beta}+p^{\beta+1}\bm{i}_{\beta+1...r-1}}}
            \bra{a_{\bm{i}_{0...\beta}+p^{\beta+1}\bm{i}'_{\beta+1...r-1}}}.
    \end{aligned}
\end{equation}

For the right-hand side of the equality, it follows from Eq.(\ref{p_use}) that
\begin{equation} \label{right-1}
    \begin{aligned}
        \psi_{\substack{\bm{i}_{0...\beta}\\\bm{j}_{0...\alpha-2}}}    
        =&\frac{1}{p^{\alpha-1}}
        \sum_{\substack{\bm{i}_{\beta+1...r-1}\\\bm{i}'_{\beta+1...r-1}}}
         \omega_{p^{\alpha-1}}^{\Delta^{\bm{i}\bm{i'}}_{\beta+1...r-1}\bm{j}_{0...\alpha-2}}
         \ket{a_{\bm{i}_{0...\beta}+p^{\beta+1}\bm{i}_{\beta+1...r-1}}}
            \bra{a_{\bm{i}_{0...\beta}+p^{\beta+1}\bm{i}'_{\beta+1...r-1}}}.\\
    \end{aligned}
\end{equation}
When the summation over $i_{\beta}$ is in Eq.(\ref{right-1}), by comparing Eq.(\ref{right-1}) and Eq.(\ref{lemmarleft}) one can find $\sum_{i_{\beta}}\psi_{\substack{\bm{i}_{0...\beta}\\\bm{j}_{0...\alpha-2}}} = \sum_{j_{\alpha-1}}\psi_{\substack{\bm{i}_{0...\beta-1}\\\bm{j}_{0...\alpha-1}}}$.\par
      
\end{proof}

%%================================================%%

\section{The proof of Lemma \ref{lemma2}}\label{The proof of Lemma2}

\begin{proof}
It is straightforward to verify the two inclusion relations
\begin{equation*}
    \begin{aligned}
        &\mathrm{conv}\left(\mathrm{pure}(\mathrm{KD}_{\mathcal{A,B}}^{+})\right)\subseteq \mathrm{KD}_{\mathcal{A,B}}^{+},\\
        &\mathrm{conv}\left(\mathrm{pure}(\mathrm{KD}_{\mathcal{A,B}}^{+})\right)\subseteq \mathrm{span}_{\mathbb{R}}\left(\mathrm{pure}(\mathrm{KD}_{\mathcal{A,B}}^{+})\right).
    \end{aligned}
\end{equation*}
These two inclusion relations imply that

\begin{equation*}\mathrm{conv}\left(\mathrm{pure}(\mathrm{KD}_{\mathcal{A,B}}^{+})\right)\subseteq \mathrm{KD}_{\mathcal{A,B}}^{+}\cap \mathrm{span}_{\mathbb{R}}\left(\mathrm{pure}(\mathrm{KD}_{\mathcal{A,B}}^{+})\right).
\end{equation*}

Now we prove the converse inclusion.
For any $\rho\in \mathrm{KD}_{\mathcal{A,B}}^{+}\cap \mathrm{span}_{\mathbb{R}}\left(\mathrm{pure}(\mathrm{KD}_{\mathcal{A,B}}^{+})\right)$, we have $\rho\in\mathrm{span}_{\mathbb{R}}\left(\mathrm{pure}(\mathrm{KD}_{\mathcal{A,B}}^{+})\right)$. It means $\rho$ can be expressed as a real linear combination of KD-classical pure states:
\begin{equation}\label{inspanrkd}
    \begin{aligned} 
     &\rho
    =\sum_{\bm{i}_{0...r-1}}
    \lambda_{\bm{i}_{0...r-1}}a_{\bm{i}_{0...r-1}}
    +\sum_{\alpha=1}^{r-1}\sum_{\substack{\bm{i}_{0...\beta-1}\\\bm{j}_{0...\alpha-1}}}
    \lambda_{\substack{\bm{i}_{0...\beta-1}\\\bm{j}_{0...\alpha-1}}}\psi_{\substack{\bm{i}_{0...\beta-1}\\\bm{j}_{0...\alpha-1}}}
    +\sum_{\bm{j}_{0...r-1}}
    \lambda_{\bm{j}_{0...r-1}}b_{\bm{j}_{0...r-1}}.
    \end{aligned}
\end{equation}
where $\beta+\alpha=r$.

We need to prove that for a new expression of $\rho$, all the new coefficients of $\rho$ are nonnegative.
We adopt a two-step approach to complete this proof.

\textbf{Step 1}. In this step, we construct the new expression of $\rho$ in Eq.(\ref{inspanrkd}) and show that the components of the first $r$ groups are non-negative.\par

We first consider the 1st and 2nd sums. For $\bm{i}_{0...r-1}=\bm{i}_{0...r-2}+p^{r-1}i_{r-1}$, we rewrite the sum $\sum_{\bm{i}_{0...r-1}}=\sum_{\bm{i}_{0...r-2}}\sum_{\bm{i}_{r-1}}$. With this notation, we rewrite the 1st and 2nd groups of $\rho$
\begin{equation*}
    \begin{aligned}
    \rho&
    =\sum_{\bm{i}_{0...r-2}}\sum_{i_{r-1}}\lambda_{(\bm{i}_{0...r-1})}a_{\bm{i}_{0...r-1}}+\sum_{\bm{i}_{0...r-2}}\sum_{j_0}\lambda_{\substack{\bm{i}_{0...r-2}\\j_0}}\psi_{\substack{\bm{i}_{0...r-1}\\j_0}}+...\\
    &=\sum_{\bm{i}_{0...r-2}}\sum_{i_{r-1}}\left(\lambda_{\bm{i}_{0...r-1}}-\lambda_{\bm{i}_{0...r-2}}^{\mathrm{min}}\right)
    a_{\bm{i}_{0...r-1}}
    +\sum_{\bm{i}_{0...r-2}}\sum_{j_0}\left(\lambda_{\substack{\bm{i}_{0...r-2}\\j_0}}+\lambda_{\bm{i}_{0...r-2}}^{\mathrm{min}}\right)\psi_{\substack{\bm{i}_{0...r-2}\\j_0}}+...\\
    &:=\sum_{\bm{i}_{0...r-2}}\sum_{i_{r-1}}\left(\lambda_{\bm{i}_{0...r-1}}-\lambda_{\bm{i}_{0...r-2}}^{\mathrm{min}}\right)
    a_{\bm{i}_{0...r-1}}
    +\sum_{\bm{i}_{0...r-2}}\sum_{j_0}\widetilde\lambda_{\substack{\bm{i}_{0...r-2}\\j_0}}\psi_{\substack{\bm{i}_{0...r-2}\\j_0}}+...
    \end{aligned}
\end{equation*}
where $\lambda_{\bm{i}_{0...r-2}}^{\mathrm{min}}=\min_{i_{r-1}}\left\{\lambda_{\bm{i}_{0...r-1}}\right\}$ . Note that $\lambda_{\bm{i}_{0...r-2}}^{\mathrm{min}}$ does not depend on $i_{r-1}$, but only on $\bm{i}_{0...r-2}$. Here, for simplicity, we denote $\left(\lambda_{\substack{\bm{i}_{0...r-2}\\j_0}}+\lambda_{\bm{i}_{0...r-2}}^{\mathrm{min}}\right)$ by $\widetilde\lambda_{\substack{\bm{i}_{0...r-2}\\j_0}}$. 
The 2nd equality holds since  $\sum_{i_{r-1}}a_{\bm{i}_{0...r-1}}=
    \sum_{j_0}\psi_{\substack{\bm{i}_{0...r-2}\\j_0}}$ in Corollary \ref{corollary}. 
    Now the new coefficients of the first group of $\rho$ are nonnegative.

%The preceding formula is equivalent to

%\begin{equation*}
%    \Gamma\left(\{\lambda_{\bm{i}_{0...r-1}}\},\{\lambda_{\substack{\bm{i}_{0...r-2}\\j_0}}\}...\right)
%    =\Gamma\left(\{\lambda_{\bm{i}_{0...r-1}}-\lambda_{\bm{i}_{0...r-2}}^{\mathrm{min}}\},\{\widetilde\lambda_{\substack{\bm{i}_{0...r-2}\\j_0}}\}...\right).
%\end{equation*}
For the 2nd to the $r$-th group of coefficients, 
%noted that the sum with subscript $\bm{j}_{0...\alpha}$ is corresponding to the $(\alpha+2)$-th sum of Eq.(\ref{inspanrkd}). 
we can employ Lemma \ref{lemma1} to construct the nonnegative coefficient for $1\leq \alpha\leq r-2$ and $\beta+\alpha=r$ in Eq.(\ref{inspanrkd}). Generally, consider the coefficients of the following two groups,
%for the $\alpha+1$ group and change the coefficient of $\alpha+2$ group with $1\leq \alpha\leq r-2$ and $\beta+\alpha=r$
\begin{equation*}
\begin{split}
   &\sum_{\substack{\bm{i}_{0...\beta-1}\\ \bm{j}_{0...\alpha-1}}}\widetilde\lambda_{\substack{\bm{i}_{0...\beta-1}\\ \bm{j}_{0...\alpha-1}}}\psi_{\substack{\bm{i}_{0...\beta-1}\\ \bm{j}_{0...\alpha-1}}}+\sum_{\substack{\bm{i}_{0...\beta-2}\\ \bm{j}_{0...\alpha}}}\lambda_{\substack{\bm{i}_{0...\beta-2}\\ \bm{j}_{0...\alpha}}}\psi_{\substack{\bm{i}_{0...\beta-2}\\ \bm{j}_{0...\alpha}}}\\
   =&\sum_{\substack{\bm{i}_{0...\beta-2}\\ \bm{j}_{0...\alpha-1}}}\sum_{\bm{i}_{\beta-1}}\left(\widetilde\lambda_{\substack{\bm{i}_{0...\beta-1}\\ \bm{j}_{0...\alpha-1}}}-\widetilde{\lambda}^{\min}_{\substack{\bm{i}_{0...\beta-2}\\ \bm{j}_{0...\alpha-1}}}\right)\psi_{\substack{\bm{i}_{0...\beta-1}\\ \bm{j}_{0...\alpha-1}}}+\sum_{\substack{\bm{i}_{0...\beta-2}\\ \bm{j}_{0...\alpha-1}}}\sum_{\bm{j}_{\alpha}}\left(\lambda_{\substack{\bm{i}_{0...\beta-2}\\ \bm{j}_{0...\alpha}}}+\widetilde{\lambda}^{\min}_{\substack{\bm{i}_{0...\beta-2}\\ \bm{j}_{0...\alpha-1}}}\right)\psi_{\substack{\bm{i}_{0...\beta-2}\\ \bm{j}_{0...\alpha}}}\\
   :=&\sum_{\substack{\bm{i}_{0...\beta-2}\\ \bm{j}_{0...\alpha-1}}}\sum_{\bm{i}_{\beta-1}}\left(\widetilde\lambda_{\substack{\bm{i}_{0...\beta-1}\\ \bm{j}_{0...\alpha-1}}}-\widetilde{\lambda}^{\min}_{\substack{\bm{i}_{0...\beta-2}\\ \bm{j}_{0...\alpha-1}}}\right)\psi_{\substack{\bm{i}_{0...\beta-1}\\ \bm{j}_{0...\alpha-1}}}+\sum_{\substack{\bm{i}_{0...\beta-2}\\ \bm{j}_{0...\alpha-1}}}\sum_{\bm{j}_{\alpha}}\widetilde\lambda_{\substack{\bm{i}_{0...\beta-2}\\\bm{j}_{0...\alpha}}}\psi_{\substack{\bm{i}_{0...\beta-2}\\ \bm{j}_{0...\alpha}}}
\end{split}
\end{equation*}
where $\widetilde\lambda_{\substack{\bm{i}_{0...\beta-2}\\\bm{j}_{0...\alpha-1}}}^{\mathrm{min}}=\min\limits_{{i_{\beta-1}}}\{\widetilde\lambda_{\substack{\bm{i}_{0...\beta-1}\\\bm{j}_{0...\alpha-1}}}\} $ with $\widetilde\lambda_{\substack{\bm{i}_{0...\beta-2}\\\bm{j}_{0...\alpha-1}
    }}^{\mathrm{min}}$ dose not depend on $i_{\beta-1}$ but only on $\bm{i}_{0...\beta-2}$ and $\bm{j}_{0...\alpha-1}$. We denoted $\widetilde\lambda_{\substack{\bm{i}_{0...\beta-2}\\\bm{j}_{0...\alpha}}}
    =\lambda_{\substack{\bm{i}_{0...\beta-2}\\\bm{j}_{0...\alpha}}}+\widetilde{\lambda}_{\substack{\bm{i}_{0...\beta-2}\\\bm{j}_{0...\alpha-1}
    }}^{\min} $. The sum notation change since $\bm{i}_{0...\beta-1}=\bm{i}_{0...\beta-2}+\bm{i}_{\beta-1}p^{\beta-1}$ and $\bm{j}_{0...\alpha}=\bm{j}_{0...\alpha-1}+\bm{j}_{\alpha}p^{\alpha}$. \par
    
We have constructed the nonnegative coefficients for the 1st sum and $(r-2)$-th sum in Eq.(\ref{inspanrkd}), and make the coefficients of $(r-1)$-th sum $\widetilde\lambda_{\substack{i_0\\\bm{j}_{0...r-2}}}$. By the same way construct nonnegative coefficient of $(r-1)$-th sum and obtain the new expression of $\rho$:
\begin{equation}\label{newexpression}
    \begin{aligned}
        \rho
    &=\sum_{\bm{i}_{0...r-1}}
    \left(\lambda_{\bm{i}_{0...r-1}}-\lambda_{\bm{i}_{0...r-2}}^{\min}\right)
    a_{\bm{i}_{0...r-1}}
    +\sum_{\alpha=1}^{r-2}\sum_{\substack{\bm{i}_{0...\beta-1}\\\bm{j}_{0...\alpha-1}}}
    \left(\widetilde\lambda_{\substack{\bm{i}_{0...\beta-1}\\\bm{j}_{0...\alpha-1}}}
    -\widetilde\lambda_{\substack{\bm{i}_{0...\beta-2}\\\bm{j}_{0...\alpha-1}}}^{\min}\right)
    \psi_{\substack{\bm{i}_{0...\beta-1}\\\bm{j}_{0...\alpha-1}}}\\
    &+\sum_{\substack{i_0\\\bm{j}_{0...r-2}}}\left(\widetilde\lambda_{\substack{i_0\\\bm{j}_{0...r-2}}}-\widetilde\lambda_{\bm{j}_{0...r-2}}^{\min}\right)\psi_{\substack{i_0\\\bm{j}_{0...r-2}}}+\sum_{\bm{j}_{0...r-1}}
    \left(\lambda_{\bm{j}_{0...r-1}}+\widetilde\lambda_{\bm{j}_{0...r-2}}^{\min}\right)
    b_{\bm{j}_{0...r-1}}
    .
    \end{aligned}
\end{equation}
Now all the coefficients are nonnegative except for $\lambda_{\bm{j}_{0...r-1}}+\widetilde\lambda_{\bm{j}_{0...r-2}}^{\min}$. In step 2, we will show that it is nonnegative.

\textbf{Step 2}. For any index $j'=\bm{j'}_{0...r-1}$, we will expand $\lambda_{\bm{j'}_{0...r-1}}+\widetilde\lambda_{\bm{j'}_{0...r-2}}^{\mathrm{min}}$ as the sum of the original coefficient $\lambda$ and prove that the sum is nonnegative. 

Since $\widetilde\lambda_{\bm{j'}_{0...r-2}}^{\mathrm{min}}
        =\min_{i_0}\{\widetilde\lambda_{\substack{i_0\\\bm{j'}_{0...r-2}}}\}$,
it follows that there exists $i_0^*$ such that $\min_{i_0}\{\widetilde\lambda_{\substack{i_0\\\bm{j'}_{0...r-2}}}\}=
\widetilde\lambda_{\substack{i_0^*\\\bm{j'}_{0...r-2}}}$.
By the definition of $\widetilde\lambda$,
\begin{equation}\label{lamada-expand}
    \quad\lambda_{\bm{j'}_{0...r-1}}+\widetilde\lambda_{\bm{j'}_{0...r-2}}^{\min}=
        \lambda_{\bm{j'}_{0...r-1}}+\widetilde\lambda_{\substack{i_0^*\\\bm{j'}_{0...r-2}}}\\
        =\lambda_{\bm{j'}_{0...r-1}}+\lambda_{\substack{i_0^*\\\bm{j'}_{0...r-2}}}
        +\widetilde\lambda_{\substack{i_0^*\\\bm{j'}_{0...r-3}}}^{\min}.
\end{equation}
Similarly, for each $\widetilde\lambda_{\substack{\bm{i^*}_{0...\beta-2}\\\bm{j'}_{0...\alpha-1}}}^{\min}$ with $ r-1\geq\alpha\geq 1$ and $\beta+\alpha=r$, there exists $i_{\beta-1}^*$ such that
\begin{equation*}
\widetilde\lambda_{\substack{\bm{i^*}_{0...\beta-2}\\\bm{j'}_{0...\alpha-1}}}^{\min}=\widetilde\lambda_{\substack{\bm{i^*}_{0...\beta-2}+p^{\beta-1}i^*_{\beta-1}\\\bm{j'}_{0...\alpha-1}}}:=\widetilde\lambda_{\substack{\bm{i^*}_{0...\beta-1}\\ \bm{j'}_{0...\alpha-1}}}=\lambda_{\substack{\bm{i^*}_{0...\beta-1}\\ \bm{j'}_{0...\alpha-1}}}+\widetilde\lambda_{\substack{\bm{i^*}_{0...\beta-1}\\\bm{j'}_{0...\alpha-2}}}^{\min}.
\end{equation*}
Here we denote $\bm{i}^*_{0...\beta-1}=\bm{i}^*_{0...\beta-2}+p^{\beta-1}i^*_{\beta-1}$. Repeatedly applying this recurrence relation, Eq.(\ref{lamada-expand}) can be written as
\begin{equation}\label{finalcoefficent}
    \lambda_{\bm{j'}_{0...r-1}}+\widetilde\lambda_{\bm{j'}_{0...r-2}}^{\min}=\lambda_{\bm{j'}_{0...r-1}}+
    \sum_{\substack{r-1\geq \alpha \geq1\\ \alpha+\beta=r}}\lambda_{\substack{\bm{i^*}_{0...\beta-1}\\\bm{j'}_{0...\alpha-1}}}+\lambda_{\bm{i^*}_{0...r-1}}.
\end{equation}

We now prove that Eq.(\ref{finalcoefficent}) is nonnegative.
For subscripts $j'=\bm{j'}_{0...r-1}$ and $i^*=\bm{i^*}_{0...r-1}$, we use Lemma \ref{propofKD} to calculate the KD-distribution with the basis states $\ket{a_{i^*}},\ket{b_{j'}}$:
\begin{equation}\label{fingeq}
\begin{aligned}
    \bra{a_{i^*}}\rho\ket{b_{j'}}\braket{b_{j'}|a_{i^*}}
        =|\braket{a_{i^*}|b_{j'}}|^2\left(\lambda_{\bm{i^*}_{0...r-1}}
        +\sum_{\substack{r-1\geq \alpha\geq1\\ \alpha+\beta=r}}\lambda_{\substack{\bm{i^*}_{0...\beta-1}\\\bm{j'}_{0...\alpha-1}}}
        +\lambda_{\bm{j'}_{0...r-1}}\right) \geq 0,    
\end{aligned}
\end{equation}
since $\rho\in \mathrm{KD}_{\mathcal{A,B}}^+$.
From the property of the DFT matrix, $|\braket{a_{i^*}|b_{j'}}|^2=\frac{1}{d}$, we have 
\begin{equation*}
    \lambda_{\bm{i^*}_{0...r-1}}
        +\sum_{\substack{r-1\geq \alpha\geq1\\ \alpha+\beta=r}}\lambda_{\substack{\bm{i^*}_{0...\beta-1}\\\bm{j'}_{0...\alpha-1}}}
        +\lambda_{\bm{j'}_{0...r-1}} \geq 0.
\end{equation*}
That is, Eq.(\ref{finalcoefficent}) is nonnegative.

Due to the arbitrariness of $j'$, all coefficients in the final group are non-negative. Furthermore, all the coefficients of Eq.(\ref{newexpression}) are nonnegative. 

Finally, $\mathrm{Tr}(\rho)=1$ implies that the sum of all the coefficients of Eq.(\ref{newexpression}) is 1.
Therefore, $\rho\in \mathrm{conv}\left(\mathrm{pure(}\mathrm{KD}_{\mathcal{A,B}}^+)\right)$, which is the desired result.
\end{proof}

%%=============================================%%

\section{The value range of $k$ in Eq.(\ref{dindexi}) }
\label{value}

\begin{proof}
    Suppose that the number of distinct residues of $k$ in $\mathbb{Z}_{x}$ is $t$ with $t<x$. Next, we consider
\begin{equation}
    m=\sum_{u=1}^{n}(\bm{i_u})_{0...r_{(u,y)}}M_uN_u.
\end{equation}
Given that each $(i_u)_{t}\in \mathbb{Z}_{p_u}$ , we can deduce that $m$ can take at most $p_1^{r_{(1,y)}} p_2^{r_{(2,y)}} \cdots p_n^{r_{(n,y)}}$ distinct values (i.e., $y$ distinct values).
Based on the above discussion, for $i = ky + m$, it can take at most $ty$ distinct values modulo $d$ (where $ty < xy = d$). 

On the other hand, according to the Chinese Remainder Theorem

\begin{equation*}
    i=\sum_u\left( (i_u)_{0}p_u^{0}+(i_u)_{1}p_u^{1}+...+(i_u)_{m_u-1}p_u^{m_u-1}\right)M_uN_u.
\end{equation*}
By the uniqueness of the Chinese Remainder Theorem, $i$ modulo $d$ should take exactly $d$ distinct values. This contradicts our previous conclusion of having fewer than $d$ distinct values (where $ty < xy = d$). Therefore, it follows that $k$ modulo $x$ must take exactly $x$ distinct values.\par

An analogous process can be applied to prove that $m$ modulo $y$ will also take precisely $y$ distinct values. Similarly, the same conclusion holds for the subscript $l$ and $s$ in the subscript $j = lx + s$ of the quantum state $\ket{b_j}$ in Eq.(\ref{purekd}).\par
\end{proof}

%%==============================%%

\section{The proof of Lemma \ref{dlemma1}}\label{The proof of Lemma5}

\begin{proof}
We expand both sides of Eq.(\ref{dlemma1main}) on the basis $\ket{a_i}$. First, analyze the left-hand side
    \begin{equation}\label{sumskk}
    \begin{aligned}
        \psi_{ms}=&\frac{1}{x}\left(\sum_{k}\omega_{x}^{sk}\ket{a_{ky+m}}\right)\times\left(\sum_{k'}\omega_{x}^{-sk'}\bra{a_{k'y+m}}\right)\\
        =&\frac{1}{x}\sum_{k,k'}\omega_{x}^{s(k-k')}\ket{a_{ky+m}}\bra{a_{k'y+m}}.
    \end{aligned}
    \end{equation}
From Eq.(\ref{dindexi}) and Eq.(\ref{dindexj}), the exponent of $\omega_{x}$ can be written as
\begin{equation}\label{skexpension}
\begin{aligned}
    s(k-k')&=\left(\sum_v(\bm{j_v})_{0...r_{(v,x)}-1}M_vN_v\right)\times\left(\sum_u\Delta^{\bm{i_u}\bm{i_u'}}_{r_{(u,y)}...r_u-1}(x/p_u^{r_{(u,x)}})N_u\right).
\end{aligned} 
\end{equation}
Note that $M_v = d/p_v^{r_v}$. If $u \neq v$, then $p_u^{r_{(u,x)}}$ divides $M_v$. It follows 
\begin{equation*}
    M_v(x/p_u^{r_{(u,x)}})=(M_{v}/p_u^{r_{u,x}})x=0 \mod x.
\end{equation*}
Therefore, the exponent of $\omega_{x}$ can be simplified to
\begin{equation}\label{ske2}
\begin{aligned}   \sum_u&\left((\bm{j_u})_{0...r_{(u,x)}-1}M_uN_u\right)\left(\Delta^{\bm{i_u}\bm{i_u'}}_{r_{(u,y)}...r_u-1}(x/p_u^{r_{(u,x)}})N_u\right).
\end{aligned}
\end{equation}

The main different term of left-hand side and right-hand side of the Eq.(\ref{dlemma1main}) is about $p_1$. We divide Eq.(\ref{ske2}) into two parts: the case of $u=1$ and that of $u\neq 1$.

If $u\neq 1$, Eq.(\ref{ske2}) is reducible to 
\begin{equation}
    \sum_{u\neq1}(\bm{j_u})_{0...r_{(u,x)}-1}\Delta^{\bm{i_u}\bm{i_u'}}_{r_{(u,y)}...r_u-1}M_uN_u^2(x/p_u^{r_{(u,x)}}):=Sum(1).
\end{equation}.

If $u=1$, Eq.(\ref{ske2}) is reducible to
\begin{equation}\label{newindexdenote}
\begin{aligned}
&\left((\bm{j_1})_{0...r_{(1,x)}-1}M_1N_1\right)\left(\Delta^{\bm{i_1}\bm{i_1'}}_{r_{(1,y)}...r_1-1}(x/p_1^{r_{(1,x)}})N_1\right)\\
=&\left((\bm{j_1})_{0...r_{(1,x)-2}}+p_1^{r_{(1,x)}-1}(\bm{j_1})_{r_{(1,x)}-1}\right)\left(\Delta^{\bm{i_1}\bm{i_1'}}_{r_{(1,y)}}+p_1\Delta^{\bm{i_1}\bm{i_1'}}_{r_{(1,y)}+1...r_1-1}\right)M_1{N_1}^2(x/p_1^{r_{(1,x)}})\\
=&(\bm{j_1})_{0...r_{(1,x)-2}}\left(\Delta^{\bm{i_1}\bm{i_1'}}_{r_{(1,y)}}+p_1\Delta^{\bm{i_1}\bm{i_1'}}_{r_{(1,y)}+1...r_1-1}\right)M_1{N_1}^2(x/p_1^{r_{(1,x)}})\\
&+(\bm{j_1})_{r_{(1,x)}-1}\Delta^{\bm{i_1}\bm{i_1'}}_{r_{(1,y)}}M_1{N_1}^2(x/p_1) \mod x   \\
:=&Sum(2)+Sum(3) \mod x.
\end{aligned}
\end{equation}
Hence, the exponent of $\omega_{x}$ in Eq.(\ref{sumskk})
\begin{equation}\label{termofwx}
s(k-k')=Sum(1)+Sum(2)+Sum(3) \mod x.
\end{equation}
Notice that in the terms of Eq.(\ref{termofwx}), only $Sum(3)$ contains the index $j_{(1,r_{(1,x)}-1)}$. Therefore, in the following discussion, we will isolate this term for separate analysis.\par
Note that $k=\sum_u(\bm{i_u})_{r_{(u,y)...r_u-1}}(x/{p_u^{r_{(u,x)}}})N_u$ from Eq.(\ref{dindexi}). Denote $\{(i_{u'})_{t}\}=({(i_{u'})_{t}}p_{u'}^t)(x/{p_{u'}^{r_{(u',x)}}})N_{u'}$ for $t\geq r_{(u',y)}$. Obviously, $\{(i_{u'})_{t}\}$ is a certain term in the summation expression of $k$. 
%we introduce a new notation to denote the deletion of a term from the summation expression of $k$,
%\begin{equation*}
%\begin{split}
%    k&=\sum_u(\bm{i_u})_{r_{(u,y)...r_u-1}}(x/{p_u^{r_{(u,x)}}})N_u-((i_{u'})_{t}p_{u'}^t)(x/{p_{u'}^{r_{(u',x)}}})N_{u'}+((i_{u'})_{t}p_{u'}^t)(x/{p_{u'}^{r_{(u',x)}}})N_{u'}\\
%    &:=\sum_u(\bm{i_u})_{r_{(u,y)...r_u-1}}(x/{p_u^{r_{(u,x)}}})N_u-\{(i_{u'})_{t}\}+\{(i_{u'})_{t}\}
%\end{split}   
%\end{equation*}
%where $\{(i_{u'})_{t}\}=: ({(i_{u'})_{t}}p_{u'}^t)(x/{p_{u'}^{r_{(u',x)}}})N_{u'}$ for $t\geq r_{(u',y)}$.  
The meaning of $k-\{(i_{u'})_t\}$ is to remove the term $\{(i_{u'})_t\}$ from the summation expression of $k$. 
%Although this notation is identical to that in (\ref{determ}), it represents a different expression, but we still retain this notation in the present proof.\par
The subscript of sum can be written as
\begin{equation*}
    \sum_{k}=\sum_{(i_{u'})_{t}}\sum_{k-\{(i_{u'})_{t}\}}
\end{equation*}.\par

Applying Eq.(\ref{sumskk}), we proceed to perform the summation of $\psi_{ms}$ over $(j_{1})_{r_{(1,x)}-1}$, i.e.,
\begin{equation}\label{esumj}
\begin{aligned}
     \sum_{(j_1)_{r_{(1,x)}-1}}\psi_{ms}=&\frac{1}{x}\sum_{(j_1)_{r_{(1,x)}-1}}\sum_{\substack{k,k'}}\omega_{x}^{s(k-k')}\ket{a_{ky+m}}\bra{a_{k'y+m}}\\
     =&\frac{1}{x}\sum_{\substack{k,k'}}\sum_{(j_1)_{r_{(1,x)}-1}}\omega_{x}^{s(k-k')}\ket{a_{ky+m}}\bra{a_{k'y+m}}\\
     =&\frac{1}{x}\sum_{\substack{k,k'}}\omega_{x}^{Sum(1)}\omega_{x}^{Sum(2)}\left(\sum_{(j_1)_{r_{(1,x)}-1}}\omega_{x}^{Sum(3)}\right)\ket{a_{ky+m}}\bra{a_{k'y+m}}.
\end{aligned}
\end{equation}
Note that the sum notation can be rewritten as
\begin{equation*}
    \begin{split}
        \sum_{\substack{k,k'}}=\sum_{\substack{k-\{(i_1)_{r_{(1,y)}}\}\\k'-\{(i'_1)_{r_{(1,y)}}\}}}\left(\sum_{(i_1)_{r_{(1,y)}}=(i_1')_{r_{(1,y)}}}+\sum_{(i_1)_{r_{(1,y)}}\neq(i_1')_{r_{(1,y)}}}\right).
    \end{split}
\end{equation*}
For $Sum(3)$, consider whether $\Delta^{\bm{i_1}\bm{i_1'}}_{r_{(1,y)}}$ is equal to zero.

\text{Case} 1. $\Delta^{\bm{i_1}\bm{i_1'}}_{r_{(1,y)}}\neq 0$, i.e., $(i_1)_{r_{(1,y)}} - (i_{1}')_{r_{(1,y)}}\neq 0$.
Then
\begin{equation*}
    \begin{aligned}
        \omega_{x}^{Sum(3)}
        =&\omega_{x}^{(\bm{j_1})_{r_{(1,x)}-1}\Delta^{\bm{i_1}\bm{i_1'}}_{r_{(1,y)}}M_1{N_1}^2(x/p_1)}
        =\omega_{p_1}^{(\bm{j_1})_{r_{(1,x)}-1}\Delta^{\bm{i_1}\bm{i_1'}}_{r_{(1,y)}}M_1{N_1}^2}.
    \end{aligned}
\end{equation*}
Since the index $(j_1)_{r_{(1,x)}-1}\in \mathbb{Z}_{p_1}$, $\sum_{(j_1)_{r_{(1,x)}-1}}\omega_{x}^{Sum(3)}=0$.\par

By Case 1,  Eq.(\ref{esumj}) is equal to
\begin{equation}\label{newd7}
    \frac{1}{x}\sum_{\substack{k-\{(i_1)_{r_{(1,y)}}\}\\k'-\{(i'_1)_{r_{(1,y)}}\}}}\sum_{(i_1)_{r_{(1,y)}}=(i_1')_{r_{(1,y)}}}\omega_{x}^{Sum(1)}\omega_{x}^{Sum(2)}\left(\sum_{(j_1)_{r_{(1,x)}-1}}\omega_{x}^{Sum(3)}\right)\ket{a_{ky+m}}\bra{a_{k'y+m}}.
\end{equation}

\text{Case} 2. $\Delta^{\bm{i_1}\bm{i_1'}}_{r_{(1,y)}}= 0$, i.e., $(i_1)_{r_{(1,y)}} - (i_{1}')_{r_{(1,y)}}=0$.
It follows
$\sum_{(j_1)_{r_{(1,x)}-1}}\omega_{x}^{Sum(3)}=p_1$. 
Therefore, Eq.(\ref{newd7}) can be written as
\begin{equation}\label{deqleft}
    \begin{aligned}
    &\sum_{(j_1)_{r_{(1,x)}-1}}\psi_{ms}
        =\frac{p_1}{x}\sum_{(i_1)_{r_{(1,y)}}=(i_1')_{r_{(1,y)}}}\sum_{\substack{k-\{(i_1)_{r_{(1,y)}}\}\\k'-\{(i_1')_{r_{(1,y)}}\}}}\omega_{x}^{Sum(1)}\omega_{x}^{Sum(2)}\ket{a_{ky+m}}\bra{a_{k'y+m}}.
    \end{aligned}
\end{equation}

According to the definition $x=\widetilde{x}p_1$,
\begin{equation*}
\begin{split}
    &r_{(1,\widetilde{x})}=r_{(1,x)}-1, \quad r_{(1,\widetilde{y})}=r_{(1,y)}+1,\\
    &r_{(u,\widetilde{x})}=r_{(u,x)},  \quad r_{(u,\widetilde{y})}=r_{(u,y)}, \quad \text{for} \quad u\neq1.
\end{split}
\end{equation*}
Then we obtain
\begin{equation}\label{leftsum1}
\begin{split}
    \omega_{x}^{Sum(1)}=&\omega_{x}^{\sum_{u\neq1}(\bm{j_u})_{0...r_{(u,x)}-1}\Delta^{\bm{i_u}\bm{i_u'}}_{r_{(u,y)}...r_u-1}M_uN_u^2(x/p_u^{r_{(u,x)}})}\\
    =&\omega_{x/p_1}^{\sum_{u\neq1}(\bm{j_u})_{0...r_{(u,x)}-1}\Delta^{\bm{i_u}\bm{i_u'}}_{r_{(u,y)}...r_u-1}M_uN_u^2(x/p_u^{r_{(u,x)}})/p_1}\\
    =&\omega_{\widetilde{x}}^{\sum_{u\neq1}(\bm{j_u})_{0...r_{(u,\widetilde{x})}-1}\Delta^{\bm{i_u}\bm{i_u'}}_{r_{(u,\widetilde{y})}...r_u-1}M_uN_u^2(\widetilde{x}/p_u^{r_{(u,\widetilde{x})}})}.
\end{split}
\end{equation}
Notice that $(i_1)_{r_{(1,y)}}=(i_1')_{r_{(1,y)}}$, i.e., $\Delta^{\bm{i_1}\bm{i_1'}}_{r_{(1,y)}}=0$. Hence
\begin{equation}\label{leftsum2}
    \begin{split}
    \omega_{x}^{Sum(2)}=&\omega_{x}^{(\bm{j_1})_{0...r_{(1,x)-2}}\left(\Delta^{\bm{i_1}\bm{i_1'}}_{r_{(1,y)}}+p_1\Delta^{\bm{i_1}\bm{i_1'}}_{r_{(1,y)}+1...r_1-1}\right)M_1{N_1}^2(x/p_1^{r_{(1,x)}})}\\
        %=&\omega_{\widetilde{x}}^{(\bm{j_1})_{0...r_{(1,\widetilde{x})-1}}\Delta^{\bm{i_1}\bm{i_1'}}_{r_{(1,\widetilde{y})}...r_1-1}M_1{N_1}^2(\widetilde{x}/p_1^{r_{(1,x)}})p_1}\\
        =&\omega_{\widetilde{x}}^{(\bm{j_1})_{0...r_{(1,\widetilde{x})-1}}\Delta^{\bm{i_1}\bm{i_1'}}_{r_{(1,\widetilde{y})}...r_1-1}M_1{N_1}^2(\widetilde{x}/p_1^{r_{(1,\widetilde{x})}})}
    \end{split}
\end{equation}

We now consider the right-hand side of Eq.(\ref{dlemma1main}) in Lemma \ref{dlemma1}.

\begin{equation}
\begin{aligned}
    \psi_{\widetilde{m}\widetilde{s}}
    =&\frac{1}{\widetilde{x}}\sum_{\widetilde{k},\widetilde{k}'}\omega_{\widetilde{x}}^{\widetilde{s}(\widetilde{k}-\widetilde{k}')}\ket{a_{\widetilde{k}\widetilde{y}+\widetilde{m}}}\bra{a_{\widetilde{k}'\widetilde{y}+\widetilde{m}}}.
\end{aligned}   
\end{equation}

Similarly, we categorize the summation terms in the exponent of $\omega_{{\widetilde{x}}}$, then we sum $\psi_{\widetilde{m}\widetilde{s}}$ over $(\widetilde{i_1})_{r_{(1,y)}}$. Finally, we obtain 
\begin{equation}\label{deqr}
    \sum_{(\widetilde{i_1})_{r_{(1,y)}}}\psi_{\widetilde{m}\widetilde{s}}=\frac{1}{{\widetilde{x}}}
    \sum_{(\widetilde{i}_1)_{r_{(1,y)}}}
    \sum_{\widetilde{k},\widetilde{k}'}
    \omega_{{\widetilde{x}}}^{\widetilde{Sum}(1)}\omega_{{\widetilde{x}}}^{\widetilde{Sum}(2)}\ket{a_{\widetilde{k}{\widetilde{y}}+\widetilde{m}}}\bra{a_{\widetilde{k}'{\widetilde{y}}+\widetilde{m}}},
\end{equation}
where
\begin{equation}\label{rightsum1}
\begin{aligned}
     \widetilde{Sum}(1)&:=\sum_{u\neq1}(\bm{\widetilde{j}_{u}})_{0...r_{(u,\widetilde{x})}-1}\Delta^{\bm{\widetilde{i_u}}\bm{\widetilde{i_u'}}}_{r_{(u,\widetilde{y})}...r_u-1}M_uN_u^2({\widetilde{x}}/p_u^{r_{(u,\widetilde{x})}})\\
\end{aligned},
\end{equation}
\begin{equation}\label{rightsum2}
    \begin{aligned}
        \widetilde{Sum}(2)&:=(\bm{\widetilde{j}_1})_{0...r_{(1,\widetilde{x})-1}}\Delta^{\bm{\widetilde{i_1}}\bm{\widetilde{i_1'}}}_{r_{(1,\widetilde{y})}...r_1-1}M_1N_1^2({\widetilde{x}}/p_1^{r_{(1,\widetilde{x})}})
    \end{aligned}.
\end{equation}

%We denote $\mathbb{I}_{\widetilde{k},\widetilde{k}'}$ as

%\begin{equation}
%\{\widetilde{i}_{(1,m_{(1,y)}+1)}...\widetilde{i}_{(1,m_1-1)}\}\bigcup\left\{\bigcup_{u\neq1}\{\widetilde{i}_{(u,m_{(u,{y})})}'...\widetilde{i}_{(u,m_u-1)}'\}\right\}.
%\end{equation}

Now we check the subscript of $\ket{a}$ and $\bra{a}$ in Eq.(\ref{deqleft}) and Eq.(\ref{deqr}). Recall that with Eq.(\ref{epofiwithd}), the subscript of $\ket{a}$ have these expressions:
\begin{equation}\label{subcompare}
    \begin{split}
        &ky=\sum_{u}p_{u}^{r_{u,y}}(\bm{i_u})_{r_{(u,y)}...r_u-1}M_uN_u; \quad \quad \widetilde{k}\widetilde{y}=\sum_{u}p_{u}^{r_{(u,\widetilde{y})}}(\bm{\widetilde{i_u}})_{r_{(u,\widetilde{y})}...r_u-1}M_uN_u;\\
        &m=\sum_{u}(\bm{i_u})_{0...r_{(u,y)}-1}M_uN_u;\quad \quad\quad\; \quad \widetilde{m}=\sum_{u}(\bm{\widetilde{i_u}})_{0...r_{(u,\widetilde{y})}-1}M_uN_u.
    \end{split}
\end{equation}
Notice the form of subscript of the sum notation in Eq.(\ref{deqleft}),we rewrite the subscript of $\ket{a}$ as
\begin{equation*}
    ky+m=\left(k-\{(i_1)_{r_{(1,y)}}\}\right)y+\left(\{(i_1)_{r_{(1,y)}}\}y+m\right).
\end{equation*}
Consider the 1st term and recall that $r_{(1,\widetilde{y})}=r_{(1,y)}+1$ and $r_{(u,\widetilde{y})}=r_{(u,y)}$ for $u\neq1$:
\begin{equation*}
    \begin{split}
        \left(k-\{(i_1)_{r_{(1,y)}}\}\right)y=&\sum_{u\neq1}p_{u}^{r_{(u,y)}}(\bm{i_u})_{r_{(u,y)}...r_u-1}M_uN_u+p_{1}^{r_{(1,y)}+1}(i_1)_{r_{(1,y)}+1...r_1-1}M_1N_1\\
        =&\sum_{u}p_{u}^{r_{(u,\widetilde{y})}}(\bm{i_u})_{r_{(u,\widetilde{y})}...r_u-1}M_uN_u.
    \end{split}
\end{equation*}
It has the same form of $\widetilde{k}\widetilde{y}$ in Eq.(\ref{subcompare}). Similar for $\left(\{(i_1)_{r_{(1,y)}}\}y+m\right)$:
\begin{equation*}
    \left(\{(i_1)_{r_{(1,y)}}\}y+m\right)=\sum_{u}(\bm{i_u})_{0...r_{(u,\widetilde{y})}}M_uN_u.
\end{equation*}
It has the same form of $\widetilde{m}$ in Eq.(\ref{subcompare}). The subscript of $\bra{a}$ is similar.\par
%%差描述
When the subscripts of $\ket{a}$ and $\bra{a}$ in Eq.(\ref{deqleft}) are equal to those in Eq.(\ref{deqr}), respectively, i.e., $ky+m=\widetilde{k}\widetilde{y}+\widetilde{m}\; , \;k'y+m=\widetilde{k'}\widetilde{y}+\widetilde{m}$, which means 
\begin{equation*}
(i_u)_{t}=(\widetilde{i_u})_{t}\;,\;(i_u')_{t}=(\widetilde{i_u'})_{t};\quad \quad t\in[0,m_u-1]    
\end{equation*}
for $n\geq u\geq1$. With the expression Eq.(\ref{leftsum1}), Eq.(\ref{rightsum1}), Eq.(\ref{leftsum2}) and Eq.(\ref{rightsum2}), when $ky+m=\widetilde{k}\widetilde{y}+\widetilde{m}\; , \;k'y+m=\widetilde{k'}\widetilde{y}+\widetilde{m}$, we have $\omega_{x}^{Sum(1)}=\omega_{\widetilde{x}}^{\widetilde{Sum(1)}}$ and $\omega_{x}^{Sum(2)}=\omega_{\widetilde{x}}^{\widetilde{Sum(2)}}$. That means that Eq.(\ref{deqleft}) is equal to Eq.(\ref{deqr}). This completes the proof of Lemma \ref{dlemma1}.\par
\end{proof}

%%==============================================%%

\section{The proof of Theorem \ref{dthm}}\label{proofofthm2}

\begin{proof}
Suppose $d=p_1^{r_1}...p_n^{r_n}$ and $x_0=p_1^{r_1}...p_{\alpha}^{r_{\alpha}}$, $y_0=p_{\alpha+1}^{r_{\alpha+1}}...p_n^{r_n}$. Let $L=\sum_{i=1}^{n}r_i$. From Example \ref{example}, we can easily obtain a fact that the length of any path from  $v_{x_0}$ to $v_{y_0}$ in graph $G({x_0})$ is $L$. Without loss of generality, we select a representative path
\begin{equation}\label{xofpath}
    v_{path}:v_{x_{0}}\rightarrow v_{x_{1}}\rightarrow...\rightarrow v_{x_{L}},
\end{equation} 

where
\begin{equation*}
  \begin{aligned} 
   x_t=\left \{
   \begin{array}{ll}
    \frac{x_0}{p_1^{\delta_t}},     & \quad \delta_t = t \; \text{if}  \; 0 \leq t \leq r_1,\\[10pt]
    \frac{x_0}{p_{h}^{\delta_t}\prod_{i=1}^{h-1}p_i^{r_i}},    &\quad \delta_t = t-\sum_{i=1}^{h-1}r_i \; \text{if}  \;  \sum_{i=1}^{h-1}r_i < t \leq \sum_{i=1}^{h}r_i \; \text{and} \; 2<h \leq \alpha, \\[10pt]
    p_{\alpha+1}^{\delta_t},     &\quad \delta_t = t-\sum_{i=1}^{\alpha}r_i \; \text{if}  \; \sum_{i=1}^{\alpha}r_i < t \leq \sum_{i=1}^{\alpha+1}r_i,\\[10pt]
    p_{h}^{\delta_t}\prod_{i=\alpha+1}^{h-1}p_i^{r_i},  &\quad \delta_t = t-\sum_{i=1}^{h-1}r_i \; \text{if}  \;    \sum_{i=1}^{h-1}r_i < t \leq \sum_{i=1}^{h}r_i \; \text{and} \; \alpha+2<h \leq n.\\
   \end{array}
   \right.   
  \end{aligned} 
\end{equation*}
For Example \ref{example}, we apply this method to construct the path, which gives path Eq.(\ref{path-example}). The method of proof for different path is similar. Here we prove the result for the above path Eq.(\ref{xofpath}).

The inclusion $\mathrm{conv}(v_{path})\subseteq \mathrm{KD}_{\mathcal{A,B}}^+\bigcap \mathrm{span}_{\mathbb{R}}(v_{path})$ follows analogously to the proof technique of Lemma \ref{lemma2}. Here we only consider the reverse inclusion. The proof proceeds in two steps.        

\textbf{Step 1}. Suppose a state $\rho\in \mathrm{KD}_{\mathcal{A,B}}^+\bigcap \mathrm{span}_{\mathbb{R}}(v_{path})$. It follows $\rho\in\mathrm{span}_{\mathbb{R}}(v_{path})$. Then $\rho$ can be expressed as a real linear combination of KD-classical pure states, i.e.,
    \begin{equation}\label{dspanrho}
\rho=\sum_{m_0,s_0}\lambda_{m_0,s_0}\psi_{m_0,s_0}+\sum_{m_1,s_1}\lambda_{m_1,s_1}\psi_{m_1,s_1}+...+\sum_{m_{L},s_{L}}\lambda_{m_{L},s_{L}}\psi_{m_{L},s_{L}}.
    \end{equation}
Note that  the $m_t$ and $s_t$ with superscripts correspond to the prime factorization of $d=x_ty_t$, and to the representation of KD-classical pure states on vertex $v_{x_t}$.\par

Recall that $(\bm{i_u})_{0...r_{(u,y)}-1}=0$ if $r_{(u,y)}=0$, and $(\bm{i_u})_{r_{(u,y)}...r_u-1}=0$ if $r_{(u,y)}=r_u$ in Eq.(\ref{dindexi}). Since $y_0=p_{\alpha+1}^{r_{\alpha+1}}...p_n^{r_n}$ and $x_0=p_{1}^{r_1}...p_{\alpha}^{r_{\alpha}}$, notice that $r_{(u,y_0)}=r_u$ for $\alpha+1\leq u\leq n$ and $r_{(u,y_0)}=0$ for $1\leq u\leq \alpha$, then $m_0$ and $s_0$ can be written as
\begin{equation*}
\begin{split}
    m_0= \sum_{u=\alpha+1}^{n}(\bm{i_{u}})_{0...r_u-1}M_uN_u, \quad
    s_0=\sum_{v=1}^{\alpha}(\bm{j_v})_{0...r_v-1}M_vN_v
\end{split}
\end{equation*}

Employing Eq.(\ref{m-m'}) to analyze the 1st and 2nd summation terms in Eq.(\ref{dspanrho}), we have
\begin{equation}\label{determ}
\begin{split}
    &m_0=m_1-({i_1})_0M_1N_1 := m_1-\{({i_1})_{0}\}\\
    &s_1=s_0-({j_1})_{r_1-1}M_1N_1p_1^{r_1-1} := s_0-\{({j_1})_{r_1-1}\}.
\end{split}
\end{equation}
Here we use a new notation $\{({i_u})_{t}\}=({i_u})_{t}M_uN_up_u^t$ to simplify the original notation. The subscript of the summation notation can be rewritten as
\begin{equation*}
\begin{split}
    \sum_{m_1}=\sum_{m_1-\{({i_1})_0\}}\sum_{({i_1})_0}=\sum_{m_0}\sum_{({i_1})_0}; \quad \sum_{s_0}=\sum_{s_0-\{({j_1})_{r_1-1}\}}\sum_{({j_1})_{r_1-1}}=\sum_{s_1}\sum_{({j_1})_{r_1-1}}
\end{split}   
\end{equation*}

%We similarly construct a new set of coefficients $\lambda''$ subject to the non-negativity constraint $\lambda''\geq0$.\par
%First, we analyze the first and second summation terms in (\ref{dspanrho}). Specifically, the summation indices $t^{(1)}\in\mathbb{I}^{1}$ correspond to

%\begin{equation*}
%%%%%    \end{equation*}
Let's consider the coefficients of KD-classical pure state corresponding to $v_{x_0}$ and $v_{x_1}$.

    \begin{equation*}
        \begin{aligned}
            &\sum_{m_0,s_0}\lambda_{m_0,s_0}\psi_{m_0,s_0}+\sum_{m_1,s_1}\lambda_{m_1,s_1}\psi_{m_1,s_1}\\
            =&\sum_{\substack{m_0\\s_1-\{({j_1})_{r_1-1}\}}}\sum_{({j_1})_{r_1-1}}\lambda_{m_0,s_0}\psi_{m_0,s_0}+\sum_{\substack{m_1-\{({i_1})_{0}\}\\s_1}}\sum_{({i_1})_{0}}\lambda_{m_1,s_1}\psi_{m_1,s_1}\\
            =&\sum_{\substack{m_0\\s_1-\{({j_1})_{r_1-1}\}}}\sum_{({j_1})_{r_1-1}}(\lambda_{m_0,s_0}-\lambda_{m_0,s_1}^{\mathrm{min}})\psi_{m_0,s_0}+\sum_{\substack{m_1-\{({i_1})_{0}\}\\s_1}}\sum_{({i_1})_{0}}(\lambda_{m_1,s_1}+\lambda_{m_0,s_1}^{\mathrm{min}})\psi_{m_1,s_1}\\
            =&\sum_{\substack{m_0,s_0}}(\lambda_{m_0,s_0}-\lambda_{m_0,s_1}^{\mathrm{min}})\psi_{m_0,s_0}+\sum_{\substack{m_1,s_1}}(\lambda_{m_1,s_1}+\lambda_{m_0,s_1}^{\mathrm{min}})\psi_{m_1,s_1}\\
            :=&\sum_{\substack{m_0,s_0}}(\lambda_{m_0,s_0}-\lambda_{m_0,s_1}^{\mathrm{min}})\psi_{m_0,s_0}+\sum_{\substack{m_1,s_1}}\widetilde\lambda_{m_1,s_1}\psi_{m_1,s_1}
        \end{aligned}.
    \end{equation*} 
The 2nd equality holds since $\sum_{({j_1})_{r_1-1}}\psi_{m_0,s_0}=\sum_{({i_1})_0}\psi_{m_1,s_1}$
by Lemma \ref{dlemma1}, where
\begin{equation*}
    \lambda_{m_0,s_1}^{\mathrm{min}}=\min_{({j_1})_{r_1-1}}\{\lambda_{m_0,s_0}\}.
\end{equation*}
For simplicity, we denote $(\lambda_{m_1,s_1}+\lambda_{m_0,s_1}^{\mathrm{min}})=\widetilde\lambda_{m_1,s_1}$ in the last equality.
Now, the coefficients $\lambda_{m_0,s_0}-\lambda_{m_0,s_1}^{\mathrm{min}}\geq 0$. Next we consider the coefficients of KD-classical pure state corresponding to $v_{x_{t-1}}$ and $v_{x_t}$ with $1< t\leq L$.

%$\lambda_{m_0,s_0}-\lambda_{m_0,s_1}^{\mathrm{min}}\geq 0$.
%Denoted $(\lambda_{m_1,s_1}+\lambda_{m_0,s_1}^{\mathrm{min}})=\widetilde\lambda_{m_1,s_1}$. 

Case 1. Set  $\sum_{i=1}^{0}r_i=0$. Then that $\sum_{i=1}^{h-1}r_i <t \leq \sum_{i=1}^{h}r_i$ with $1< h \leq \alpha$ contains two cases for $x_t$, i.e.,  $0<t\leq r_1$ or $\sum_{i=1}^{h-1}r_i <t \leq \sum_{i=1}^{h}r_i$ and $2< h \leq \alpha $.

%For $x_t$ with $\sum_{i=1}^{h-1}r_i <t \leq \sum_{i=1}^{h}r_i\leq \sum_{i=1}^{\alpha}r_i$ and $2<t$. Denoted $\sum_{i=1}^{h-1}r_i=0$ when $h=1$ for $\sum_{i=1}^{0}r_i=0<t\leq \sum_{i=1}^{1}r_i$. 
By Eq.(\ref{m-m'}), we have
\begin{equation*}
    \begin{split}
        m_t=m_{t-1}+\{({i_h})_{\delta_t-1}\},\quad
        s_t=s_{t-1}-\{({j_h})_{r_h-\delta_t}\}.
    \end{split}
\end{equation*}
where $\delta_t=t-\sum_{i=1}^{h-1}r_i$.
Following the same procedure as for $v_{x_0}$ and $v_{x_1}$ above, consider the coefficient of the $(t-1)$-th and $t$-th terms as 
\begin{equation*}
\begin{split}
    ...+&\sum_{m_{t-1},s_{t-1}}\widetilde{\lambda}_{m_{t-1},s_{t-1}}\psi_{m_{t-1},s_{t-1}}+\sum_{m_t,s_t}\lambda_{m_t,s_t}\psi_{m_t,s_t}+...\\
    =...+&\sum_{\substack{m_{t-1}\\s_{t-1}-\{({j_h})_{r_1-\delta_t}\}}}\sum_{({j_h})_{r_1-\delta_t}}(\widetilde{\lambda}_{m_{t-1},s_{t-1}}-\widetilde{\lambda}^{\min}_{m_{t-1},s_t})\psi_{m_{t-1},s_{t-1}}\\
    &+\sum_{\substack{m_t-\{({i_h})_{\delta_t-1}\}\\s_t}}\sum_{({i_h})_{\delta_t-1}}(\lambda_{m_t,s_t}+\widetilde{\lambda}^{\min}_{m_{t-1},s_t})\psi_{m_t,s_t}+...\\
    =...+&\sum_{\substack{m_{t-1},s_{t-1}}}(\widetilde{\lambda}_{m_{t-1},s_{t-1}}-\widetilde{\lambda}^{\min}_{m_{t-1},s_t})\psi_{m_{t-1},s_{t-1}}+\sum_{\substack{m_t,s_t}}(\lambda_{m_t,s_t}+\widetilde{\lambda}^{\min}_{m_{t-1},s_t})\psi_{m_t,s_t}+...\\
    :=...+&\sum_{\substack{m_{t-1},s_{t-1}}}(\widetilde{\lambda}_{m_{t-1},s_{t-1}}-\widetilde{\lambda}^{\min}_{m_{t-1},s_t})\psi_{m_{t-1},s_{t-1}}+\sum_{\substack{m_t,s_t}}\widetilde\lambda_{m_t,s_t}\psi_{m_t,s_t}+...
\end{split} 
\end{equation*}
%\begin{equation*}
%    \widetilde{\lambda}_{m_{t-1},s_{t-1}}-\widetilde{\lambda}^{\min}_{m_{t-1},s_t},
%\end{equation*}
where $\widetilde{\lambda}_{m_{t-1},s_t}^{\min}=\min_{({j_h})_{r_h-\delta t}}\{\widetilde{\lambda}_{m_{t-1},s_{t}+\{({j_h})_{r_h-\delta_t}\}}\}$. 
The first equality follows form $\sum_{({j_1})_{r_1-1}}\psi_{m_{t-1},s_{t-1}}=\sum_{({i_1})_0}\psi_{m_t,s_t}$
by Lemma \ref{dlemma1}. Now the coefficient of the $t$-th term, $\widetilde{\lambda}_{m_{t-1},s_{t-1}}-\widetilde{\lambda}^{\min}_{m_{t-1},s_t}$, is nonnegative.

Case 2. For $\sum_{i=1}^{h-1}r_i<t \leq \sum_{i=1}^{h}r_i$ with $\alpha+1 \leq h \leq n$, a similar discussion to that in Case 1 leads to
\begin{equation*}
    \begin{split}
        m_t=m_{t-1}-\{(i_h)_{r_h-\delta_t}\},\quad
        s_t=s_{t-1}+\{(j_h)_{\delta_t-1}\}.
    \end{split}
\end{equation*}
Thus
\begin{equation*}
\begin{split}
    ...+&\sum_{m_{t-1},s_{t-1}}\widetilde{\lambda}_{m_{t-1},s_{t-1}}\psi_{m_{t-1},s_{t-1}}+\sum_{m_t,s_t}\lambda_{m_t,s_t}\psi_{m_t,s_t}+...\\
    =...+&\sum_{\substack{m_{t-1}-\{({i_h})_{r_h-\delta_t}\}\\s_{t-1}}}\sum_{({i_h})_{r_h-\delta_t}}(\widetilde{\lambda}_{m_{t-1},s_{t-1}}-\widetilde{\lambda}^{\min}_{m_{t},s_{t-1}})\psi_{m_{t-1},s_{t-1}}\\
    &+\sum_{\substack{m_t\\s_t-\{({j_h})_{\delta_t-1}\}}}\sum_{({j_h})_{\delta_t-1}}(\lambda_{m_t,s_t}+\widetilde{\lambda}^{\min}_{m_{t},s_{t-1}})\psi_{m_t,s_t}+...\\
    =...+&\sum_{\substack{m_{t-1},s_{t-1}}}(\widetilde{\lambda}_{m_{t-1},s_{t-1}}-\widetilde{\lambda}^{\min}_{m_{t},s_{t-1}})\psi_{m_{t-1},s_{t-1}}
    +\sum_{\substack{m_t,s_t}}(\lambda_{m_t,s_t}+\widetilde{\lambda}^{\min}_{m_{t},s_{t-1}})\psi_{m_t,s_t}+...\\
    :=...+&\sum_{\substack{m_{t-1},s_{t-1}}}(\widetilde{\lambda}_{m_{t-1},s_{t-1}}-\widetilde{\lambda}^{\min}_{m_{t},s_{t-1}})\psi_{m_{t-1},s_{t-1}}
    +\sum_{\substack{m_t,s_t}}\widetilde{\lambda}_{m_t,s_t}\psi_{m_t,s_t}+...
\end{split} 
\end{equation*}
where $\widetilde{\lambda}_{m_t,s_{t-1}}^{\min}=\min_{(i_h)_{r_h-\delta t}}\{\widetilde{\lambda}_{m_{t}+\{(i_h)_{r_h-\delta_t}\},s_{t-1}}\}$. % 把先把写清楚与变化下标关联起来
Note that the index of the minimum here is different from that in Case 1. The coefficient of the $(t-1)$-th term, $\widetilde{\lambda}_{m_{t-1},s_{t-1}}-\widetilde{\lambda}^{\min}_{m_t,s_{t-1}}$, has become nonnegative.

Now the new expression of $\rho$ is
\begin{equation*}
    \begin{split}
        \rho=&\sum_{m_0,s_0}(\lambda_{m_0,s_0}-\lambda^{\min}_{m_0,s_1})\psi_{m_0,s_1}+\sum_{t=2}^{L_{\alpha}}\sum_{m_{t-1},s_t-1}(\widetilde{\lambda}_{m_{t-1},s_{t-1}}-\widetilde{\lambda}^{\min}_{m_{t-1},s_t})\psi_{m_{t-1},s_{t-1}}\\
        &+\sum_{t=L_{\alpha}+1}^{L}\sum_{m_{t-1},s_t-1}(\widetilde{\lambda}_{m_{t-1},s_{t-1}}-\widetilde{\lambda}^{\min}_{m_{t},s_{t-1}})\psi_{m_{t-1},s_{t-1}}+\sum_{m_L,s_L}\widetilde{\lambda}_{m_L,s_L}\psi_{m_L,s_L},
    \end{split}
\end{equation*}
 where $L_{\alpha}=\sum_{i=1}^{\alpha}r_i$. All coefficients except $\widetilde{\lambda}_{m_L,s_L}$ are nonnegative.

\textbf{Step 2}. We prove that 
 the coefficients of the last term $\widetilde{\lambda}_{m_L,s_L}=\lambda_{m_{L},s_L}+\widetilde\lambda_{m_L,s_{L-1}}^{\mathrm{min}}$ are nonnegative. For the last vertex $v_{x_L}$, $x_L=p_{\alpha+1}^{r_{\alpha+1}}...p_n^{r_n}$. We have
    \begin{equation*}
        \begin{split}
            m_L=\sum_{u=1}^{\alpha}(\bm{i_u}
            )_{0...r_u-1}M_uN_u,\quad
            s_L=\sum_{v=\alpha+1}^{n}(\bm{j_v}
            )_{0...r_v-1}M_vN_v
        \end{split},
    \end{equation*}
and $m_L \in \mathbb{Z}_{y_L}$ and $s_L \in \mathbb{Z}_{x_L}$.
For any $m^*_L \in \mathbb{Z}_{y_L}, s^*_L \in \mathbb{Z}_{x_L}$,
we will expand $\widetilde{\lambda}_{m^*_L,s^*_L}$ as the sum of the original coefficient $\lambda$ and prove that the sum is nonnegative.\par

By definition we have $\widetilde{\lambda}_{m^*_L,s^*_L}=\lambda_{m^*_{L},s^*_L}+\widetilde\lambda_{m^*_L,s_{L-1}^*}^{\mathrm{min}}$, where $s_{L-1}^*=s_L^*-\{(j_n)_{r_n-1}\}$. Since $\widetilde\lambda_{m_L,s_{L-1}}^{\mathrm{min}} = \min_{(i_n)_{0}}\{\widetilde{\lambda}_{\substack{m_L+(i_n)_{0}\\s_{L-1}}}\}$,
there exists $(i_n^*)_{0}\in\mathbb{Z}_{p_n}$ such that
%    \begin{equation*}
%        \widetilde\lambda_{m_L,s_{L-1}}^{\mathrm{min}}=\widetilde{\lambda}_{\substack{m_L+(i^*_n)_{0}\\s_L-\{(j_n)_{r_n-1}\}}}:=\widetilde{\lambda}_{\substack{m_{L-1},s_{L-1}}}.
%    \end{equation*}
    \begin{equation*}
        \widetilde\lambda_{m^*_L,s_{L-1}^*}^{\mathrm{min}}=\widetilde{\lambda}_{\substack{m^*_L+(i^*_n)_{0}\\s_{L-1}^*}}:=\widetilde{\lambda}_{\substack{m^*_{L-1},s_{L-1}^*}}.
    \end{equation*}
Here we denote the subscript $m^*_L+(i^*_n)_{0}$ as $m_{L-1}^*$. 
%Here $\widetilde{\lambda}_{\substack{m^*_{L-1}, s_{L-1}^*}}$ is a certain element of the set $\{ \widetilde{\lambda}_{m_{L-1},s_{L-1}}\}$ .
Then $\widetilde{\lambda}_{m^*_L,s^*_L}=\lambda_{m^*_{L},s^*_L}+\widetilde{\lambda}_{\substack{m^*_{L-1},s_{L-1}^*}}$. Since $\widetilde{\lambda}_{m_{L-1},s_{L-1}}=\lambda_{m_{L-1},s_{L-1}} + \widetilde\lambda_{m_{L-1},s_{L-2}}^{\mathrm{min}}$,
    \begin{equation*}    \widetilde{\lambda}_{m^*_L,s^*_L}=\lambda_{m^*_{L},s^*_L}+\widetilde{\lambda}_{\substack{m^*_{L-1},s_{L-1}^*}}=\lambda_{m^*_{L},s^*_L}+\lambda_{\substack{m^*_{L-1},s_{L-1}^*}}+\widetilde\lambda_{m^*_{L-1},s_{L-2}^*}^{\mathrm{min}}.
    \end{equation*}
with $s_{L-2}^*=s_{L-1}^*-\{(j_n)_{r_n-2}\}$. 
During the step-by-step process of expressing $\widetilde\lambda_{m_t,s_t}$ and $\widetilde{\lambda}^{\min}_{m_{t-1},s_t}$ for $t>L_{\alpha}=\sum_{i=1}^{\alpha}r_i$, we gradually determine
\begin{equation*}
    \big\{(i_n^*)_{0}...(i_n^*)_{r_n-1}\big\}...\big\{(i^*_{\alpha+1})_{0}...(i^*_{\alpha+1})_{r_{\alpha+1}-1}\big\}.
\end{equation*}
and $\widetilde{\lambda}_{m^*_L,s^*_L}=\lambda_{m^*_L,s^*_L}+\lambda_{m^*_{L-1},s^*_{L-1}}+...+\widetilde\lambda_{m^*_{L_{\alpha}},s^*_{L_{\alpha}}}.$\par
For $t=L_{\alpha}$, $\widetilde{\lambda}_{m^*_{L_{\alpha}},s^*_{L_{\alpha}}}={\lambda}_{m^*_{L_{\alpha}},s^*_{L_{\alpha}}}+\widetilde\lambda^{\min}_{m^*_{L_{\alpha}-1},s^*_{L_{\alpha}}}$ where $m^*_{L_{\alpha}-1}=m^*_{L_{\alpha}}-\{({i_{\alpha}})_{r_{\alpha}-1}\}$.
Since $\widetilde\lambda^{\min}_{m_{L_{\alpha}-1},s_{L_{\alpha}}}=\min_{({j_{\alpha}})_{0}}\{\widetilde\lambda_{m_{L_{\alpha}-1},s_{L_{\alpha}}+({j}_{\alpha})_{0}}\}$, there exists $({j}^*_{\alpha})_{0} \in \mathbb{Z}_{p_{\alpha}} $ such that
\begin{equation*}
    \widetilde\lambda^{\min}_{m^*_{L_{\alpha}-1},s^*_{L_{\alpha}}}=\widetilde\lambda_{m^*_{L_{\alpha}-1},s^*_{L_{\alpha}}+({j}^*_{\alpha})_{0}}:=\widetilde\lambda_{m^*_{L_{\alpha}-1},s^*_{L_{\alpha}-1}}
\end{equation*}
Here we denote $s^*_{L_{\alpha}-1}=s^*_{L_{\alpha}}+({j}^*_{\alpha})_{0}$. So we have
\begin{equation*}
    \widetilde{\lambda}_{m^*_{L_{\alpha}},s^*_{L_{\alpha}}}={\lambda}_{m^*_{L_{\alpha}},s^*_{L_{\alpha}}}+{\lambda}_{m^*_{L_{\alpha}-1},s^*_{L_{\alpha}-1}}+\widetilde\lambda^{\min}_{m^*_{L_{\alpha}-2},s^*_{L_{\alpha}-1}}
\end{equation*}
with $m^*_{L_{\alpha}-2}=m^*_{L_{\alpha}-1}-\{({i_{\alpha}})_{r_{\alpha}-2}\}$.
During the similar process we gradually determine 
\begin{equation*}
    \big\{(j^*_{\alpha})_{0}...(j^*_{\alpha})_{r_{\alpha}-1}\big\}...\big\{(j_1^*)_{0}...(j_1^*)_{r_1-1}\big\},
\end{equation*}
and $\widetilde{\lambda}_{m^*_L,s^*_L}$ is written as the sum of the original coefficient $\lambda$, i.e.,
    \begin{equation}\label{sumofoc}
        \widetilde{\lambda}_{m^*_L,s^*_L}=\lambda_{m^*_L,s^*_L}+\lambda_{m^*_{L-1},s^*_{L-1}}+...+\lambda_{m^*_0,s^*_0}.
    \end{equation} 
    
Next, we construct $i'$ and $j'$ to show that Eq.(\ref{sumofoc}) is nonnegative. Let $i'$ and $j'$ be
%\begin{equation}\label{dindexij}
%    \begin{aligned}
%        i'=\sum_{u=1}^{\alpha}(\bm{i_u})_{0...r_u-1}M_uN_u+\sum_{u=\alpha+1}^{n}(\bm{i_u^*})_{0...r_u-1}M_uN_u=\sum_{u=1}^{n}(\bm{i'_u})_{0...r_u-1}M_uN_u\\
%        j'=\sum_{v=1}^{\alpha}(\bm{j_v^*})_{0...r_v-1}M_vN_v+\sum_{v=\alpha+1}^{n}(\bm{j_v})_{0...r_v-1}M_vN_v=\sum_{v=1}^{n}(\bm{j’_v})_{0...r_v-1}M_vN_v.
%    \end{aligned}
%\end{equation}
\begin{equation}\label{dindexij}
    \begin{aligned}
        i'=&m^*_L+\sum_{u=\alpha+1}^{n}(\bm{i_u^*})_{0...r_u-1}M_uN_u &:=\sum_{u=1}^{n}(\bm{i'_u})_{0...r_u-1}M_uN_u\\
        j'=&\sum_{v=1}^{\alpha}(\bm{j_v^*})_{0...r_v-1}M_vN_v+s^*_L &:=\sum_{v=1}^{n}(\bm{j’_v})_{0...r_v-1}M_vN_v.
    \end{aligned}
\end{equation}
For  $x_t$ and $y_t$, the corresponding $m_t^*,s_t^*$ in Eq.(\ref{sumofoc}) is
\begin{equation*}
    \begin{split}
        m_t^*=\sum_{u=1}^{n}(\bm{i'_u})_{0...r_{(u,y_t)}-1}M_uN_u,\quad
        s_t^*=\sum_{v=1}^{n}(\bm{j'_{v}})_{0...r_{(v,x_t)}-1}M_vN_v.
    \end{split}
\end{equation*}
Since $\rho\in \mathrm{KD}_{\mathcal{A,B}}^+$, by Lemma \ref{propofKD}, we get
\begin{equation*}
   \bra{a_{i'}}\rho\ket{b_{j'}}\braket{b_{j'}|a_{i'}}
        =|\braket{b_{j'}|a_{i'}}|^2\left(\lambda_{m_L^*,s_L^*}+\lambda_{m_{L-1}^*,s_{L-1}^*}+...+\lambda_{m_0^*,s_0^*}\right)\geq 0,\\ 
\end{equation*}
that means  

\begin{equation*}
       \lambda_{m_L^*,s_L^*}+\lambda_{m_{L-1}^*,s_{L-1}^*}+...+\lambda_{m_0^*,s_0^*}\geq 0.
\end{equation*}

We can get that all the coefficients of $\psi_{m_L,s_L}$ are nonnegative since  the arbitrariness of $m_L^*,s_L^*$. Since $\mathrm{Tr}(\rho)=1$, it follows directly that the sum of the coefficients in the new expression of $\rho$ is equal to one. Consequently, $\rho\in \mathrm{conv}(v_{path})$.
\end{proof}

%%=============================================%%
%% For submissions to Nature Portfolio Journals %%
%% please use the heading ``Extended Data''.   %%
%%=============================================%%

%%=============================================================%%
%% Sample for another appendix section			       %%
%%=============================================================%%

%% \section{Example of another appendix section}\label{secA2}%
%% Appendices may be used for helpful, supporting or essential material that would otherwise 
%% clutter, break up or be distracting to the text. Appendices can consist of sections, figures, 
%% tables and equations etc.

\end{appendices}

%%===========================================================================================%%
%% If you are submitting to one of the Nature Portfolio journals, using the eJP submission   %%
%% system, please include the references within the manuscript file itself. You may do this  %%
%% by copying the reference list from your .bbl file, paste it into the main manuscript .tex %%
%% file, and delete the associated \verb+\bibliography+ commands.                            %%
%%===========================================================================================%%

\bibliography{ref}% common bib file
%% if required, the content of .bbl file can be included here once bbl is generated
%%\input sn-article.bbl

\end{document}